\DeclareMathAlphabet{\mathcal}{OMS}{cmsy}{m}{n} 
\newcommand{\quant}{\mathbb{Q}}
\newcommand{\stam}[1]{}
\newcommand{\range}{\mathsf{range}}
\newcommand{\hadar}[1]{\colorbox{green!30!white}{......}\footnote{\colorbox{green!30!white}{\textsc{hf:}}~#1\colorbox{green!30!white}{.}}}
\newcommand{\sarai}[1]{\colorbox{pink!70!white}{......}\footnote{\colorbox{pink!70!white}{\textsc{sr:}}~#1\colorbox{pink!70!white}{.}}}
\theoremstyle{plain}
\newtheorem{theorem}{Theorem}
\newtheorem{proposition}[theorem]{Proposition}
\newtheorem{definition}[theorem]{Definition}
\newtheorem{claim}[theorem]{Claim}
\theoremstyle{remark}
\newtheorem{remark}{Remark}
\newtheorem{example}{Example}
\newcommand{\commentout}[1]{}
\newcommand{\la}{\langle}
\newcommand{\ra}{\rangle}
\newcommand{\NN}{\mathbb{N}}
\newcommand{\tuple}[1]{\langle #1 \rangle}
\newcommand{\tuplevar}[1]{\{ #1 \}}
\newcommand{\hl}{\mathfrak{L}}
\newcommand{\hlang}{\mathfrak{L}}
\newcommand{\lang}{\mathcal{L}}
\newcommand{\A}{{\mathcal{A}}}
\newcommand{\wass}[2]{{\bi #1}{_{#2}}}
\newcommand{\assw}[2]{{#1}_{\bi #2}}
\newcommand{\naturals}{\NN}
\newcommand{\comp}[1]{\textsf{\small #1}}
\newcommand{\bi}[1]{\textbf{\textit #1}} 
\newcommand{\reglang}[1]{\mathcal{L}(#1)}
\newcommand{\Le}[1]{\mathbf{L}(#1)}
\newcommand{\Ri}[1]{\mathbf{R}(#1)}
\title{Realizable and Context-Free Hyperlanguages}
\author{Hadar Frenkel
\institute{CISPA Helmholtz Center for Information Security, Saarbrücken, Germany }
\email{hadar.frenkel@cispa.de}
\and
Sarai Sheinvald 
\institute{Department of Software Engineering, Braude College of Engineering, Karmiel, Israel}
\email{sarai@braude.ac.il}
}
\begin{document}
\maketitle

\begin{abstract}
{\em Hyperproperties} lift conventional trace-based languages from a set of execution traces to a
set of sets of executions. From a formal-language perspective, these are sets of sets of words, namely {\em hyperlanguages}. {\em Hyperautomata} are based on classical automata models that are lifted to handle hyperlanguages. {\em Finite hyperautomata} (NFH) have been suggested to express regular hyperproperties. We study the {\em realizability} problem for regular hyperlanguages: given a set of languages, can it be precisely described by an NFH? We show that the problem is complex already for singleton hyperlanguages. 
We then go beyond regular hyperlanguages, and study {\em context-free} hyperlanguages. We show that the natural extension to context-free hypergrammars is highly undecidable. We then suggest a refined model, namely {\em synchronous hypergrammars}, which enables describing interesting non-regular hyperproperties, while retaining many decidable properties of context-free grammars.  
\end{abstract}
\section{Introduction}\label{sec:intro}

{\em Hyperproperties}~\cite{cs10} generalize traditional trace properties~\cite{as85} to {\em system properties}, i.e., from sets of traces to sets of sets of traces. A hyperproperty dictates how a system should behave in its entirety and not just based on its individual executions. 
Hyperproperties have been shown to be a powerful tool for expressing and reasoning about 
information-flow security policies~\cite{cs10} and important properties of cyber-physical 
systems~\cite{wzbp19} such as {\em sensitivity} and {\em robustness}, as well as consistency 
conditions in distributed computing such as
{\em linearizability}~\cite{bss18}. 
Different types of logics, such as HyperLTL, HyperCTL$^*$~\cite{DBLP:conf/post/ClarksonFKMRS14},
HyperQPTL~\cite{DBLP:phd/dnb/Rabe16} and HyperQCTL$^*$~\cite{DBLP:conf/lics/CoenenFHH19}
have been suggested for expressing hyperproperties.

In the {\em automata-theoretic approach} 
both the system and the specification are modeled as automata whose language is the sets of execution traces of the system, and the set of executions that satisfy the specification ~\cite{DBLP:conf/lics/VardiW86,DBLP:conf/banff/Vardi95}. Then, problems such as model-checking~\cite{DBLP:books/daglib/0007403-2} (``Does the system satisfy the property?'') and satisfiability (``Is there a system that satisfies the property?'') are reduced to decision problems for automata, such as containment (``Is the language of an automaton $A$ contained in the language of automaton $B$?'')  and nonemptiness (``Is there a word that the automaton accepts?''). 
Finite-word and $\omega$-regular automata are used for modeling trace specifications~\cite{vw94}. {\em Hyperautomata}, introduced in~\cite{bs21}, generalize word automata to automata that run on sets of words. Just as hyperproperties describe a system in its entirety, the {\em hyperlanguages} of hyperautomata describe the language in its entirety. 

The work in~\cite{bs21} focuses on {\em nondeterministic finite-word hyperautomata} (NFH), that are able to model {\em regular hyperlanguages}. 
An NFH $\A$ uses {\em word variables} that are assigned words from a language $\lang$, as well as a {\em quantification condition} over the variables, which describes the existential ($\exists$) and global ($\forall$) requirements from $\lang$. 
The {\em underlying NFA} of $\A$ runs on the set of words assigned to the word variables, all at the same time. The hyperlanguage of $\A$ is then the set of all languages that satisfy the quantification condition. The decidability of the different decision problems for NFH heavily depends on the quantification condition. For example, nonemptiness of NFH is decidable for the conditions $\forall^*$ (a sequence of $\forall$-quantifiers), $\exists^*$ and $\exists^*\forall^*$, but is undecidable for $\forall\exists$. 
In~\cite{ggs21} NFH are used to specify {\em multi-properties}, which express the behaviour of several models that run in parallel. 

A natural problem for a model $M$ for languages is {\em realizability}: given a language $\lang$, can it be described by $M$? For finite-word automata, for example, the answer relies on the number of equivalence classes of the Myhill-Nerode relation for $\lang$. 
For hyperlanguages, we ask whether we can formulate a hyperproperty that precisely describes a given set of languages. We study this problem for NFH: given a set $\hl$ of languages, can we construct an NFH whose hyperlanguage is $\hl$? We can ask the question generally, or for specific quantification conditions.
We focus on a simple case of this problem, where $\hl$ consists of a single language, (that is, $\hl=\{\lang\}$ for some language $\lang$), which turns out to be non-trivial. 
In~\cite{DBLP:conf/csfw/FinkbeinerHT19}, the authors present automata constructions for safety regular hyperproperties. As such, they are strictly restricted only to $\forall^*$-conditions.

We show that for the simplest quantification conditions, $\exists^*$ and $\forall^*$, no singleton hyperlanguage is realizable, and that single alternation does not suffice for a singleton hyperlanguage consisting of an infinite language.
We show that when $\lang$ is finite, then $\{\lang\}$ is realizable with a $\forall\exists$ quantification condition.

We then define {\em ordered} languages. These are languages that can be enumerated by a function $f$ that can be described by an automaton that reads pairs of words: a word $w$, and $f(w)$. We show that for an ordered language $\lang$, the hyperlanguage $\{\lang\}$ is realizable with a quantification condition of $\exists\forall\exists$.
We then generalize this notion to {\em partially ordered} languages, which are enumerated by a relation rather than a function. We show that for this case, $\{\lang\}$ is realizable with a quantification condition of $\exists^*\forall\exists^*$. 

Finally, we use ordered languages to realize singleton hyperlanguages consisting of regular languages: we show that when $\lang$ is a prefix-closed regular language, then it is partially ordered, and that an NFH construction for $\{\lang\}$ is polynomial in the size of a finite automaton for $\lang$. 
We then show that every regular language is partially ordered. Therefore, when $\lang$ is regular, then $\{\lang\}$ is realizable with a quantification condition of $\exists^*\forall\exists^*$. 
The summary of our results is listed in Table~\ref{tab:realizability.summary}.


\begin{table}[t]
    \centering
    \begin{tabular}{|c|c|c|} 
    \hline
         {Type of language} & Quantification Type & Realizability
         \\ 
         \hline
         \multirow{2}{*} {Finite }& $\exists^*,~\forall^*$ & unrealizable (T.\ref{thm:unrealizable}) \\
         & $\forall\exists$ & polynomial (T.\ref{thm:finite.fe.realizable})
        \\
 \hline
 Infinite & $\exists^*,~\forall^*,~\exists^*\forall^*$ & unrealizable (T.\ref{thm:unrealizable})\\
 \hline
 Ordered & $\exists\forall\exists$ & polynomial (T. \ref{thm:ordered.realizable})\\
 \hline
 partially Ordered &$\exists^*\forall\exists^*$ & exponential (T.\ref{thm:partial.order.realizable})\\
 \hline
 Prefix-Closed Regular &$\exists\forall\exists^*$ & polynomial (T.\ref{thm:prefix.closed.realizable}) \\
 \hline
 Regular &$\exists^*\forall\exists^*$ & doubly exponential (T.\ref{thm:regular.realizable})\tablefootnote{See remark~\ref{rem:automatic}.} \\
 \hline
 
    \end{tabular}
    \caption{Summary of realizability results for singleton hyperlanguages.}
    \label{tab:realizability.summary}
\vspace{-4mm}
\end{table}

In the second part of the paper, we go beyond regular hyperlanguages, and study {\em context-free hyperlanguages}. To model this class, we generalize context-free grammars (CFG) to {\em context-free hypergrammars} (CFHG), similarly to the generalization of finite-word automata to NFH: we use an {\em underlying CFG} that derives the sets of words that are assigned to the word variables in the CFHG $G$. The quantification condition of $G$ defines the existential and global requirements from these assignments. 

The motivation for context-free hyperlanguages is clear: they allow expressing more interesting hyperproperties. As a simple example, consider a robot which we want to return to its charging area before its battery is empty. This can be easily expressed with a CFHG with a $\forall$-condition, as we demonstrate in Section~\ref{sec:CFGH}, Example~\ref{ex:CFHG}. 
Note that since the underlying property -- {\em charging time is larger than action time} -- is non-regular, NFH cannot capture this specification. Extending this example, using an $\exists\forall$-condition we can express the property that all such executions of the robot are bounded, so that the robot cannot charge and act unboundedly.

Some aspects regarding context-free languages in the context of model-checking have been studied. In~\cite{DBLP:conf/stacs/Finkbeiner017,DBLP:conf/spin/PommelletT18} the authors explore model-checking of HyperLTL properties with respect to context-free models. There, the systems are context-free, but not the specifications. The work of~\cite{DBLP:conf/concur/BouajjaniER94} studies the verification of non-regular temporal properties, and~\cite{DBLP:journals/acta/FridmanP14} studies the synthesis problem of context-free specifications. These do not handle context-free \emph{hyperproperties}.

While most natural decision problems are decidable for regular languages, this is not the case for CFG. For example, the universality (``Does the CFG derive all possible words?'') and containment problems for context-free languages are undecidable. 
The same therefore holds also for CFHG. However, the nonemptiness and membership (``Does the CFG derive the word $w$''?) problems are decidable for CFG. 

We study the various decision problems for CFHG.
Specifically, We explore the nonemptiness problem (``Is there a language that the CFHG derives?''); and the membership problem (``Does the CFHG derive the language $\lang$''?). These problems correspond to the satisfiability and model-checking problems, respecitively. 
We show that for general CFHG, most of these problems soon become undecidable (see Table~\ref{tab:cfg.summary}). Some of the undecidability results are inherent to CFG. Some, however, are due to the {\em asynchronous} nature of CFHG: when the underlying CFG of a CFHG derives a set of words that are assigned to the word variables, it does not necessarily do so synchronously. For example, in one derivation step one word in the set may be added $2$ letters, and another $1$ letter. NFH read one letter at a time from every word in the set, and are hence naturally synchronous. In \cite{bcbfs21}, the authors study asynchronous hyperLTL, which suffers from the same phenomenon. 

We therefore define {\em synchronous context-free hyperlanguages}, which require synchronous reading of the set of words assigned to the word variables. We also define {\em synchronous CFHG} (syncCFHG), a fragment of CFHG in which the structure of the underlying CFG is limited in a way that ensures synchronous behavior. We prove that syncCFHG precisely captures the class of synchronous context-free hyperlanguages. Further, we show that some of the undecidable problems for CFHG, such as the nonemptiness problem for the $\forall^*$- and $\exists\forall^*$-fragments, become decidable for syncCFHG.


\begin{table}[t]
    \centering
    \begin{tabular}{|c | c|c|c|} \hline
         \multicolumn{2}{|c|}{Problem} &  syncCFHGs & General CFHGs 
         \\ \hline
         \multirow{4}{*} {Emptiness }& $\exists^*$ & polynomial (T.\ref{thm:ExistsRankedEmptiness}) & polynomial (T.\ref{thm:ExistsRankedEmptiness})  \\
         & $\forall^* $,  $\exists\forall^*$  & polynomial (T.\ref{thm:forallsync}) & undecidable (T.\ref{thm:undecforall}) \\
         & $\exists^*\forall^*$ & undecidable (T.\ref{thm:emptinessexistsforall}) & undecidable (T.\ref{thm:undecforall}) 
        \\
 
  \hline
 Finite Membership& $*$ & 
 exponential 
 (R.\ref{rem:finitemem})&  exponential (T.\ref{thm:finitemembership})
          \\
 
 \hline
 \multirow{2}{*}{Regular Membership} &$ \exists^*$&  exponential (R.\ref{rem:regmemsync})
 &  exponential~(T.\ref{thm:regmodelcheck}) \\ 
 & $\forall^* $ & undecidable (T.\ref{thm:forallsyncundec}) & undecidable (T.\ref{thm:forallsyncundec})
 
         \\

 \hline
    \end{tabular}
    \caption{Summary of decidability results for synchronous and general CFHGs.}
    \label{tab:cfg.summary}
    \vspace{-4mm}
\end{table}

\section{Preliminaries}\label{sec:prelim}

\paragraph{Hyperautomata}

We assume that the reader is familiar with the definitions of deterministic finite automata (DFA) and non-deterministic finite automata (NFA). 
\begin{definition}
\label{def:nfh}
Let $\Sigma$ be an alphabet.
A {\em hyperlanguage $\hl$ over $\Sigma$} is a set of languages over $\Sigma$, that is, $\hl\in 2^{2^{\Sigma^*}}$. A {\em nondeterministic finite-word hyperautomaton} (NFH) is a tuple 
$\A = \tuple{\Sigma,X,Q,Q_0,F,\delta,\alpha}$, where $X$ is a finite set of {\em word variables}, and 
$\alpha = \quant_1 x_1\  \cdots \quant_k x_k$ is a {\em quantification condition}, where $\quant_i\in \{\exists,\forall\}$ for 
every $i\in[1,k]$, and $\tuple{{\hat\Sigma},Q,Q_0,F,\delta}$ forms an underlying NFA over $\hat\Sigma = (\Sigma\cup\{\#\})^X$. 
\end{definition}
Let $\lang$ be a language. 
We represent an assignment $v:X\rightarrow \lang$ as a {\em word assignment} $\wass{w}{v}$, which is a word over the alphabet $(\Sigma\cup\{\#\})^X$ (that is, assignments from $X$ to $(\Sigma\cup\{\#\})^*$), where the $i$'th letter of $\wass{w}{v}$ represents the $k$ $i$'th letters of the words 
$v(x_1),\ldots ,v(x_k)$ (in case that the words are not of equal length, we ``pad''  the end of the shorter words with $\#$-symbols).
We represent these $k$ $i$'th letters as an assignment 
denoted $\{\sigma_{1{x_1}},\sigma_{2{x_2}},\ldots, \sigma_{k{x_k}}\}$, where $x_j$ is assigned $\sigma_j$.
For example, the assignment $v(x_1)=aa$ and $v(x_2)=abb$ is represented by the word assignment $\wass{w}{v} =  \{a_{x_1},a_{x_2}\}\{a_{x_1},b_{x_2}\}\{\#_{x_1},b_{x_2}\}$. 

The acceptance condition for NFH is defined with respect to a language $\lang$, the underlying NFA $\hat\A$, the quantification condition $\alpha$, and an assignment $v:X\rightarrow \lang$. 

\begin{itemize}
    \item For $\alpha = \epsilon$, define $\lang\vdash _v (\alpha,\hat\A)$ if $\wass{w}{v}\in\lang(\hat\A)$. 

\item For $\alpha = \exists x. \alpha'$, define $\lang\vdash_v (\alpha, \hat\A)$ if  there exists $w\in \lang$ s.t. $\lang \vdash_{v[x\mapsto w]} (\alpha',\hat\A)$.

\item For $\alpha = \forall x. \alpha'$, define $\lang\vdash_v (\alpha,\hat\A)$ if $\lang \vdash_{v[x \mapsto w]}  
(\alpha',\hat A)$ for every $w\in \lang$ .\footnote{In case that $\alpha$ begins with $\forall$, membership holds vacuously with the empty language. We 
restrict the discussion to satisfaction by nonempty languages.}
\end{itemize}

When $\alpha$ includes all of $X$, then membership is independent of the assignment, and we say that {\em $\A$ accepts $\lang$}, and denote $\lang\in\hlang(\A)$.

\begin{definition}
Let $\A$ be an NFH. The {\em hyperlanguage} of $\A$, denoted $\hlang{(\A)}$, is 
the set of all languages that $\A$ accepts.
When the quantification condition $\alpha$ of an NFH $\A$ is $\quant_1 x_1 .\quant 2 x_2 \cdots \quant_k x_k$, we denote $\A$ as being a $\quant_1\quant_2 \dots \quant_k$-NFH (or, sometimes, as an $\alpha$-NFH). 
\end{definition}

\begin{example}
Consider the NFH $\A$ depicted in Figure~\ref{fig:nfh}, over the alphabet $\{a\}$. The quantification condition $\forall x.\exists y$ requires that in a language $\lang$ accepted by $\A$, for every word $u_1$ that is assigned to $x$, there exists a word $u_2$ that is assigned to $y$ such that the joint run of $u_1,u_2$ is accepted by the underlying NFA $\hat\A$ of $\A$. 
The NFA $\hat\A$ requires that the word assigned to $y$ is longer than the word assigned to $x$: once the word assigned to $x$ ends (and the padding $\#$ begins), the word assigned to $y$ must still read at least one more $a$. Therefore, $\A$ requests that for every word in $\lang$, there exists a longer word in $\lang$. This holds iff $\lang$ is infinite. Therefore, the hyperlanguage of $\A$ is the set of all infinite languages over $\{a\}$.
\end{example}

\begin{figure}[t]
\centering
\scalebox{.8}{
        \includegraphics[scale=0.8]{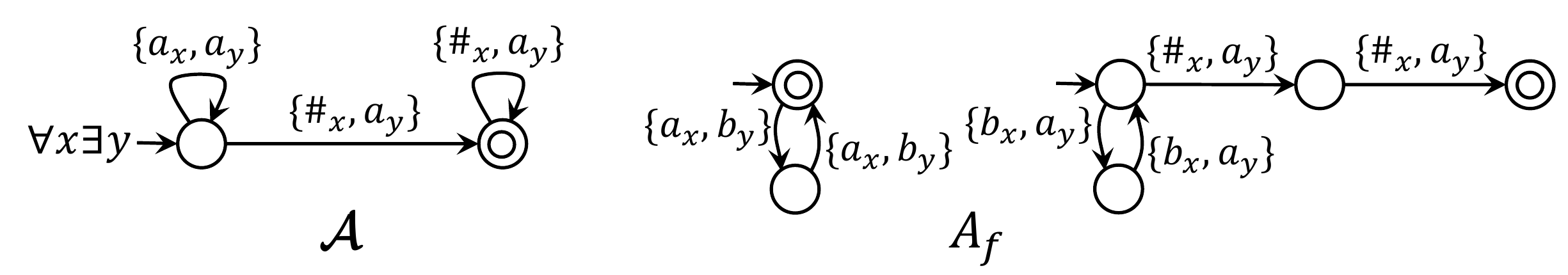}
    }
   \caption{The NFH $\A$ (left), whose hyperlanguage is the set of infinite languages over $\{a\}$, and the NFA $A_f$ that computes $f$ (right).}
    \label{fig:nfh}
    \vspace{-3mm}
\end{figure}

\paragraph{Context-Free Grammars}
\begin{definition}
A {\em context-free grammar} (CFG) is a tuple $G= \la\Sigma,V,V_0,P\ra$, where $\Sigma$ is an alphabet, $V$ is a set of {\em grammar variables}, $V_0\in V$ is an {\em initial variable}, and $P\subseteq V\times (V\cup\Sigma)^*$ is a set of {\em grammar rules}.
\end{definition}
We say that $w$ is a \emph{terminal word} if $w\in \Sigma^*$. 
Let $v\in V$ and $\alpha, \beta\in (V\cup\Sigma)^*$.
\begin{itemize} \setlength{\itemsep}{0pt}
  \setlength{\parskip}{1pt}
    \item We say that $v$ \emph{derives} $\alpha$
if $(v,\alpha) \in P$. We then denote $v\rightarrow \alpha$.
    \item We denote $\alpha \Rightarrow \beta$ if there exist $v\in V$ and $\alpha_1, \alpha_2, \beta' \in (V\cup\Sigma )^*$ such that $v\rightarrow\beta'$, $\alpha = \alpha_1 v\alpha_2$ and $\beta = \alpha_1 \beta' \alpha_2$. 
    \item We say that $\alpha$ \emph{derives} $\beta$, or that $\beta$ is \emph{derived} by $\alpha$, if there exists $n\in \mathbf{N}$ and $\alpha_1, \ldots \alpha_n \in (V\cup \Sigma)^*$ such that $\alpha_1 = \alpha$, $\alpha_n = \beta$ and $\forall 1\leq i < n:~ \alpha_i\Rightarrow \alpha_{i+1}$. We then denote $\alpha \Rightarrow^* \beta$.
\end{itemize}

The \emph{language} of a CFG $G$ is the set of all terminal words that are derived by the initial variable. That is, $\mathcal{L}(G) = \{ w\in \Sigma^* ~|~ V_0\Rightarrow^* w \}$. 


\stam{
\subsection{Additional Terms and Notations}

We present several terms and notations which we use throughout the paper.
Recall that we represent an assignment $v:X\rightarrow S$ as a word assignment $\wass{w}{v}$.
Conversely, a word ${\bi w}$ over $(\Sigma\cup\{\#\})^X$ represents an assignment $\assw{v}{w}:X\rightarrow \Sigma^*$, where $\assw{v}{w}(x_i)$ is formed by concatenating the letters of $\Sigma$ that are assigned to $x_i$ in the letters of ${\bi w}$.
We denote the set of all such words $\{\assw{v}{w}(x_1),\ldots,\assw{v}{w}(x_k)\}$ by $S({\bi w})$. 
For example, for ${\bi w}=\{a_x, a_y\}\{\#_x,b_y\}$, we have that $\assw{v}{w}(x)=a$, $\assw{v}{w}(y)=ab$, and $S({\bi w}) = \{a,ab\}$. 

Consider a function $g:A\rightarrow B$ where $A,B$ are some sets. 
The {\em range} of $g$, denoted $\range(g)$ is the set $\{g(a)| a\in A\}$.

A {\em sequence} of $g$ is a function $g':A\rightarrow B$ such that $\range(g')\subseteq \range(g)$. 
A {\em permutation} of $g$ is a function $g':A\rightarrow B$ such that $\range(g') = \range(g)$.
We extend the notions of sequences and permutations to word assignments. Let ${\bi w}$ be a word over $\hat\Sigma$. A sequence of ${\bi w}$ is a word ${\bi w'}$ such that $S(w')\subseteq S(w)$, and a permutation of ${\bi w}$ is a word ${\bi w'}$ such that $S(w')=S(w)$.
For example, for ${\bi w} = \{a_x,a_y\}\{\#_x,b_y\}$, the word ${\bi w_1}=\{a_x,a_y\}\{b_x,b_y\}$ is a sequence of $\bi w$, since $S({\bi w_1}) = \{ab\}\subseteq \{a,ab\} = S({\bi w})$. The word ${\bi w_2}=\{a_x,a_y\}\{b_x,\#_y\}$ is a permutation of $\bi w$, since 
$S({\bi w_2}) = \{ab,a\}=\{a,ab\} = S({\bi w}).$

}

\section{Realizability of Regular Hyperlanguages} \label{sec:realizability}

Every NFH $\A$ defines a set of languages $\hl$. In the {\em realizability problem} for NFH, we are given a hyperlanguage $\hl$, and ask whether there exists an NFH $\A$ such that $\hl(\A)=\hl$. The answer may depend on the quantification condition that we allow using. In this section we study the realizability problem for singleton hyperlanguages, which turns out to be non-trivial. We show that while for finite languages we can construct a $\forall\exists$-NFH, such a condition cannot suffice for infinite languages. Further, we show that a general regular language requires a complex construction and quantification condition.  

\begin{definition}\label{realizability}
Let $\hl$ be a hyperlanguage. 
For a sequence of quantifiers $\alpha = \quant_1\dots\quant_k$, we say that $\hl$ is {\em $\alpha$-realizable} if there exists an NFH $\A$ with a quantification condition $\quant_1 x_1\dots \quant_k x_k$ such that $\hl(\A) = \hl$. 
\end{definition}

We first define some operations and notations on the underlying NFA of NFH we use in our proofs.

For a word $w$, the NFA $A_w$ is an NFA for $\{w\}$.


Let $A_1,A_2$ be NFA, and let  $A_1\uparrow^\#=\tuple{\Sigma,Q,q_0,\delta_1,F_1}$ and $A_2\uparrow^\#=\tuple{\Sigma,P,p_0,\delta_2,F_2}$, where $A_i\uparrow^\#$ is an NFA for the language $\lang(A_i)\cdot{\#^*}$. We define the {\em composition} $A_1\otimes A_2$ of $A_1$ and $A_2$ to be an NFA over $\Sigma_2 =  (\Sigma\cup\{\#\})^{\{x,y\}}$, defined as
$A_1\otimes A_2 = \tuple{\Sigma_2,Q\times P, (q_0,p_0),\delta, F_1\times F_2}$, where for every $(q,\sigma,q')\in \delta_1$, $(p,\tau,p')\in \delta_2$, we have $((q,p),\{\sigma_x,\tau_y\},(q',p'))\in \delta$. That is, $A_1\otimes A_2$ is the composition of $A_1$ and $A_2$, which follows both automata simultaneously on two words (adding padding by $\#$ when necessary). A word assignment $\{x\mapsto w_1,y\mapsto w_2\}$ is accepted by $A_1\otimes A_2$ iff $w_1\in\lang(A_1)$ and $w_2\in\lang(A_2)$ (excluding the $\#$-padding). 

We also define $A_1\oplus A_2$ of $A_1$ and $A_2$ in a similar way, but the transitions are restricted to equally labeled letters. That is, for every $(q,\sigma,q')\in \delta_1$, $(p,\sigma,p')\in \delta_2$, we have $((q,p),\{\sigma_x,\sigma_y\},(q',p'))\in \delta$.
A run of the restricted composition then describes the run of $A_1$ and $A_2$ on the same word. 

We generalize the definition of both types of compositions to a sequence of $k$ NFA $A_1,A_2,\ldots A_k$, forming NFA $\bigotimes_{i=1}^k A_i$ and $\bigoplus_{=1}^k A_i$ over $(\Sigma\cup{\#})^{\{x_1,x_2,\ldots x_k\}}$, in the natural way.
When all NFA are equal to $A$, we denote this composition by $A^{\otimes k}$ (or $A^{\oplus k}$). 
When we want to explicitly name the variables $x_1,x_2,\ldots x_k$ in the compositions, we denote $\bigotimes_{i=1}^k A_i[x_1,\ldots x_k]$ (or $\bigoplus_{i=1}^k A_i[x_1,\ldots x_k]$). 

We also generalize the notion of composition to NFA over $(\Sigma\cup\{\#\})^X$ in the natural way. That is, for NFA $A_1$ and $A_2$ with sets of variables $X$ and $Y$, respectively, the NFA $A_1\otimes A_2$ is over $X\cup Y$, and follows both NFA simultaneously on both assignments (if  $X\cap Y \neq \emptyset$, we rename the variables). 

\stam{
Let 
$A_1=\tuple{(\Sigma\cup\{\#\})^X,Q,q_0,\delta_1,F_1}$ and $A_2\uparrow^\#=\tuple{\Sigma,P,p_0,\delta_2,F_2}$
We denote by $A_1\cap_x A_2$ the NFA over $(\Sigma\cup\{\#\})^X$ that is formed by the product construction of $A_1$ and $A_2$ with respect to $x$.
That is, for every $(q,S\cup\{\sigma_x\},q')\in \delta_1$ and $(p,\sigma,p')\in \delta_2$, we have $((q,p),S\cup\{\sigma_x\},(q',p'))$. The set of accepting states is $F_1\times F_2$.
}

\bigskip

We study the realizability problem for the case of singleton hyperlanguages, that is, hyperlanguages of the type $\{\lang\}$. We begin with a few observations on unrealizability of this problem, and show that general singleton hyperlanguages cannot be realized using simple quantification conditions. 

\subsection{Unrealizability}
For the homogeneous quantification conditions, we have that a $\forall$-NFH $\A$ accepts a language $\lang$ iff $\A$ accepts every $\lang'\subseteq \lang$. Therefore, a hyperlanguage $\{\lang\}$ is not $\forall$-realizable for every $\lang$ that is not a singleton.  
The same holds for every $\forall^*$-NFH.

An $\exists$-NFH $\A$ accepts a language $\lang$ iff
$\lang$ contains some word that is accepted by $\hat\A$. Thus if $\A$ is nonempty, its hyperlanguage is infinite, and clearly not a singleton.
The same holds for every $\exists^*$-NFH.

Now, consider an $\exists^k\forall^m$-NFH $\A$. As shown in \cite{bs21}, $\A$ is nonempty iff it accepts a language whose size is at most $k$. Therefore, $\{\lang\}$ is not $\exists^k\forall^m$-realizable for every $\lang$ such that $|\lang|>k$.

As we show in Theorem~\ref{thm:finite.fe.realizable}, if $\lang$ is finite then $\{\lang\}$ is $\forall\exists$-realizable. We now show that if $\lang$ is infinite, then $\{\lang\}$ is not $\forall\exists$-realizable. 
Assume otherwise by contradiction, and let $\A$ be a $\forall x.\exists y$-NFH that accepts $\{\lang\}$. Then for every $w\in \lang$ there exists $u\in \lang$ such that $\wass{w}{[x\mapsto w][y\mapsto u]}\in \lang(\hat\A)$. Let $w_1$ be some word in $\lang$. 
We construct an infinite sequence $w_1,w_2,\ldots$ of words in $\lang$, as follows. For every $w_i$, let $w_{i+1}$ be a word in $\lang$ such that $\wass{w}{[x\mapsto w_i][y\mapsto w_{i+1}]}\in \lang(\hat\A)$. 
If $w_i = w_j$ for some $i<j$, then the language $\{w_i,w_{i+1},\ldots,w_j\}$ is accepted by $\A$, and so $\lang$ is not the only language that $\A$ accepts. 
Otherwise, all words in the sequence are distinct. Then, the language $\lang_i=\{w_i,w_{i+1},\ldots \}$ is accepted by $\A$ for every $i>1$, and $\lang_i\subset \lang$. In both cases, $\lang$ is not the only language that $\A$ accepts, and so $\{\lang\}$ is not $\forall\exists$-realizable. 

To conclude, we have the following. 
\begin{theorem}\label{thm:unrealizable}
If $\lang$ contains more than one word, then $\{\lang\}$ is not $\forall^*$-realizable and not $\exists^*$-realizable. If $\lang$ contains more than $k$ words then $\{\lang\}$ is not $\exists^k\forall^*$-realizable. 
If $\lang$ is infinite then 
$\{\lang\}$ is not $\exists^*\forall^*$-realizable and not $\forall\exists$-realizable.
\end{theorem}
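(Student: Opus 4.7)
The plan is to split the theorem into four unrealizability claims and handle each by a short closure or iteration argument on the shape of accepting assignments.

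For the homogeneous cases, I would observe simple monotonicity properties of the hyperlanguage. If a $\forall^*$-NFH $\A$ accepts $\lang$, then every tuple of words from any nonempty $\lang'\subseteq \lang$ is already a tuple from $\lang$, so the universal acceptance condition is inherited by $\lang'$; hence $\hlang(\A)$ is closed under taking nonempty sublanguages and can be a singleton only if $|\lang|\le 1$. Dually, if an $\exists^*$-NFH $\A$ accepts $\lang$ via witnesses $w_1,\dots,w_k\in \lang$, those witnesses remain available in any $\lang\cup\{u\}$, so $\hlang(\A)$ is closed under adding words; a nonempty $\exists^*$-hyperlanguage is therefore automatically infinite and not a singleton.

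For the $\exists^k\forall^m$ case I invoke the small-model fact from~\cite{bs21}: given an accepting assignment with existential witnesses $w_1,\dots,w_k$, the sublanguage $\lang'=\{w_1,\dots,w_k\}$ still realizes the same witnesses, while the $\forall$-quantifiers now range over strictly fewer tuples, so $\A$ accepts $\lang'$ as well. When $|\lang|>k$ we have $\lang'\subsetneq \lang$, so $\{\lang\}\subsetneq\hlang(\A)$. This immediately yields the $\exists^*\forall^*$ claim: every $\exists^*\forall^*$-NFH is some concrete $\exists^k\forall^m$-NFH, and no finite $k$ can pin down an infinite language.

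The remaining case, $\forall\exists$-unrealizability for infinite $\lang$, is the only one that requires a nontrivial iteration. Assume $\A$ is a $\forall x.\exists y$-NFH with $\hlang(\A)=\{\lang\}$; starting from an arbitrary $w_1\in\lang$, define $w_{i+1}$ inductively to be any witness in $\lang$ that satisfies the $\exists y$-clause on input $w_i$. If the sequence ever repeats, say $w_i=w_j$ with $i<j$, then the cycle $\{w_i,\dots,w_{j-1}\}$ is closed under the witness map, so every $\forall$-input in this sublanguage has its $\exists$-witness inside the sublanguage and $\A$ accepts it; if the sequence is injective, then each tail $\{w_i,w_{i+1},\dots\}$ for $i>1$ is similarly closed under the witness map and hence accepted by $\A$ while being a proper subset of $\lang$. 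Either way $\hlang(\A)\supsetneq\{\lang\}$. The main subtlety, and the only real obstacle in the whole proof, is verifying that the sublanguage produced by this iteration is itself closed under the $\exists$-witness map; this is exactly why the construction must use a cycle or a tail rather than an arbitrary finite prefix of the sequence.
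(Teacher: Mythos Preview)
Your proof is correct and follows essentially the same approach as the paper's: the same subset/superset closure observations for the homogeneous and $\exists^k\forall^m$ cases (the latter invoking the small-model fact from~\cite{bs21}), and the same witness-iteration argument---cycle or tail---for the $\forall\exists$ case. Your explicit remark that the produced sublanguage must be closed under the witness map is exactly the crux, and the paper handles it identically.
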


For positive realizability results, we first consider a simple case of a hyperlanguage consisting of a single finite language.

\begin{theorem}\label{thm:finite.fe.realizable}
Let $\lang$ be a finite language. Then $\{\lang\}$ is $\forall\exists$-realizable.
\end{theorem}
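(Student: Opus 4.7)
Let $\lang = \{w_1, w_2, \ldots, w_n\}$ be a finite enumeration (we may assume $n\ge 1$ by the footnote restricting $\forall$-initial acceptance to nonempty languages). The plan is to build a $\forall x \exists y$-NFH whose underlying NFA accepts precisely the pair-assignments that correspond to stepping along a fixed cyclic ordering of $\lang$. Concretely, I would take the underlying NFA to be
\[
\hat{\A} \;=\; \bigcup_{i=1}^{n} A_{w_i} \otimes A_{w_{(i \bmod n)+1}},
\]
using the $\otimes$ composition defined earlier in the section. Thus $\hat{\A}$ accepts the assignment $\{x\mapsto w,\, y\mapsto u\}$ iff $(w,u)=(w_i, w_{(i \bmod n)+1})$ for some $i$. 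The size is clearly polynomial in the total length of the words of $\lang$.

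Next I would verify that $\lang \in \hlang(\A)$: for every $w\in\lang$, write $w=w_i$ and pick $u=w_{(i\bmod n)+1}\in\lang$; by construction this pair is accepted, so the $\forall x \exists y$ condition is satisfied over $\lang$.

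For the converse, I would show that no other language $\lang'$ is accepted, splitting into two cases. First, if $\lang'\not\subseteq \lang$, pick $w\in\lang'\setminus\lang$ and assign $x\mapsto w$; since $\hat{\A}$ only accepts pairs whose first component lies in $\lang$, no witness $y$ exists, so $\lang'$ is rejected. Second, if $\lang'\subsetneq\lang$ (and $\lang'$ nonempty), note that the successor map $w_i\mapsto w_{(i\bmod n)+1}$ is a single $n$-cycle on $\lang$, so the only nonempty subset of $\lang$ closed under this map is $\lang$ itself. Hence there must exist some $w_i\in\lang'$ whose cyclic successor $w_{(i\bmod n)+1}$ lies outside $\lang'$; assigning $x\mapsto w_i$ then has no valid $y\in\lang'$, so again $\lang'$ is rejected.

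The main subtlety, and the step that requires the most care, is the second subcase above: one has to exclude every proper nonempty subset of $\lang$, not merely those that forget a single element. Using a single cyclic permutation (rather than, say, a partial order or a linear chain) is what makes this uniform, since the only invariant nonempty subset of a single-cycle permutation is the whole set. Everything else is a routine verification using the $\otimes$ composition already defined in the section.
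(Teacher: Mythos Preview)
Your proposal is correct and is essentially identical to the paper's proof: the same cyclic construction $\hat\A=\bigcup_i A_{w_i}\otimes A_{w_{(i\bmod n)+1}}$ is used, and the same two observations (any accepted $\lang'$ must lie inside $\lang$ because of the first component, and closure under the single $n$-cycle forces $\lang'=\lang$) establish uniqueness. The paper phrases the second step as an induction on $j$ rather than as invariance under a cyclic permutation, but the content is the same.
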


\begin{proof}
Let $\hlang=\{w_1,w_2,\ldots w_k\}$. We construct a $\forall\exists$-NFH $\A$ for $\lang$, whose underlying NFA $\hat\A$ is the union of all NFA $A_{w_i}\otimes A_{w_{i+1}(\textrm{mod}~ k)}$. 
Let $\lang'\in \hl(\A)$. Since $\hat\A$ can only accept words in $\lang$, we have that $\lang'\subseteq \lang$. 
Since $\A$ requires, for every $w_i\in \lang'$, the existence of $w_{i+1(\textrm{mod}~ k)}$, and since $\lang'\neq \emptyset$, we have that by induction, $w_i\in \lang'$ implies $w_{i+j (\textrm{mod}~ k)}\in \lang'$ for every $1\leq j\leq k$. Therefore, $\lang\subseteq \lang'$.  
\end{proof}

\subsection{Realizability of Ordered Languages}

Since every language is countable, we can always order its words. We show that for an ordering of a language $\lang$ that is {\em regular}, that is, can be computed by an NFA, $\{\lang\}$ can be realized by an $\exists\forall\exists$-NFH.  

\begin{definition}
Let $\lang$ be a language. We say that a function $f:\lang\rightarrow \lang$ is {\em $\lang$-regular} if there exists an NFA $A_f$ over $(\Sigma\cup\{\#\})^{\{x,y\}}$ such that for every $w\in \lang$, it holds that $f(w)=u$ iff  $\wass{w}{[x\mapsto w][y\mapsto u]}\in \lang(A_f)$. We then say that $A_f$ {\em computes} $f$. 

We say that a language $\lang$ is {\em ordered} if the words in $\lang$ can be arranged in a sequence $w_1, w_2,\ldots$ such that there exists an $\lang$-regular function such that $f(w_i)=w_{i+1}$ for every $i\geq 0$.
\end{definition}

\begin{example}
Consider the language $\{a^{2i},b^{2i}|i\in\naturals\}$, and a function $\forall i\in\naturals: f(a^{2i})=b^{2i}, f(b^{2i})=a^{2i+2}$, which matches the sequence $\varepsilon, a^2,b^2,a^4,b^4,\ldots$. The function $f$ can be computed by the NFA $A_f$ depicted in Figure~\ref{fig:nfh}, which has two components: one that reads $a^{2i}$ on $x$ and $b^{2i}$ on $y$,
and one that reads $b^{2i}$ on $x$ and $a^{2i+2}$ on $y$. 
\end{example}

\begin{theorem}\label{thm:ordered.realizable}
Let $\lang$ be an ordered language. Then $\{\lang\}$ is $\exists\forall\exists$-realizable. 
\end{theorem}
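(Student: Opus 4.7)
My plan is to build an $\exists x.\forall y.\exists z$-NFH $\A$ whose underlying NFA enforces three conditions on a word assignment: (i) the word assigned to $x$ is the first word $w_1$ of the enumeration of $\lang$; (ii) the word assigned to $y$ lies in $\lang$; and (iii) the word assigned to $z$ equals $f(y)$, where $f$ is the $\lang$-regular successor function witnessing that $\lang$ is ordered. The intuition is that the $\exists x$ anchors the enumeration at $w_1$, while the universal-then-existential part $\forall y.\exists z$ forces the language to be closed under $f$. Combined with the constraint that $y$ itself belongs to $\lang$, this will pin $\lang'$ down to exactly $\lang$.

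Concretely, let $A_f$ be the NFA computing $f$ and let $A_{w_1}$ be the NFA for $\{w_1\}$. Since the domain of $f$ is $\lang$, an accepting run of $A_f$ on $\wass{w}{[y\mapsto y][z\mapsto z]}$ already certifies $y\in\lang$ and $z=f(y)$, so I would take the underlying NFA to be $\hat\A = A_{w_1}[x]\otimes A_f[y,z]$, where the composition $\otimes$ is the one defined earlier (with appropriate $\#$-padding). Note that this is polynomial in $|w_1|$ and $|A_f|$, giving the stated complexity.

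The correctness argument splits into two directions. For $\lang\in\hl(\A)$: choose $x=w_1\in\lang$; then for every $y\in\lang$, $y$ appears in the enumeration, so $f(y)\in\lang$, and taking $z=f(y)$ yields an accepting word assignment. For the converse, suppose $\lang'\in\hl(\A)$ via some witness $x\in\lang'$. The $A_{w_1}[x]$ component forces $x=w_1$, so $w_1\in\lang'$. For every $y\in\lang'$, acceptance of $A_f[y,z]$ forces $y$ to lie in the domain of $f$, i.e.\ $y\in\lang$, proving $\lang'\subseteq\lang$; and the witness $z$ must equal $f(y)$, so $\lang'$ is closed under $f$. Iterating $f$ from $w_1$ gives $w_1,w_2,w_3,\ldots\in\lang'$, hence $\lang\subseteq\lang'$, and equality follows.

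The main subtlety I anticipate is the interplay between the three padding lengths in the composition $A_{w_1}\otimes A_f$, since $|w_1|$, $|y|$, and $|z|=|f(y)|$ need not match; this is exactly what the $A\uparrow^\#$ convention already built into the definition of $\otimes$ is designed to handle, so it should go through routinely. A minor point worth a remark is what happens if $\lang$ is finite: then $f$ must cycle back (e.g.\ $f(w_k)=w_1$) for the definition to make sense, but the argument above is indifferent to whether the orbit of $w_1$ under $f$ is infinite or a cycle, so the construction still produces $\{\lang\}$.
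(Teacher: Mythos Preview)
Your proposal is correct and follows essentially the same approach as the paper: the paper also sets $\hat\A = A_{w_1}\otimes A_f$ with quantification $\exists x_1.\forall x_2.\exists x_3$, and argues the ``chain reaction'' from $w_1$ via $f$ together with the fact that $A_f$ only accepts pairs with first component in $\lang$ forces $\lang'=\lang$. Your write-up is in fact more explicit than the paper's sketch, spelling out both inclusions and flagging the padding and finite-language subtleties.
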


\begin{proof} 
Let $\lang = \{w_1,w_2,\ldots\}$ be an ordered language via a regular function $f$, and let $A_f$ be an NFA that computes $f$. 
We construct an $\exists x_1 \forall x_2 \exists x_3$-NFH $\A$ for $\{\lang\}$ by setting its underlying NFA to be $\hat\A = A_{w_1}\otimes A_f[x_1,x_2,x_3]$. 
Intuitively, $\A$ creates a ``chain-reaction'':  $A_{w_1}$ requires the existence of $w_1$, and $A_f$ requires the existence of $w_{i+1}$ for every $w_i$. By the definition of $f$, only words in $\lang$ may be assigned to $x_2$. Therefore, $\hlang(\A) = \{\lang\}$.  
\stam{
\sarai{the rest is straightforward, move to appendix}
According its definition, $\hat\A$ only accepts word assignments in which $x_1,x_2$ and $x_3$ are assigned (padded) words in $\lang$. Therefore, $\lang'\subseteq \lang$ for every $\lang'\in\hl(\A)$. 

Now, let $\lang'\in\hl(\A)$
We prove by induction that $w_i\in \lang$ for every $i\geq 1$.
Consider a word $\wass{w}{v}\in \lang(\hat\A)$. Then $v(x_1)=w_1$. Since $x_1$ is under $\exists$, we have that $w_1\in \lang'$. 
Assume that $w_i\in \lang'$. Then, since $x_2$ is under $\forall$, for $\lang'$ to be accepted by $\A$ there must exist $\wass{w}{v}\in \lang(\hat\A)$ such that $v(x_2)=w_i$. Since $A_f$ computes $f$, we have $v(x_3)=w_{i+1}$. Since $x_3$ is under $\exists$, then since $\lang'$ is accepted, it must include $w_{i+1}$. Therefore $\lang\subseteq\lang'$, and so $\lang=\lang'$, and we have $\hlang(\A)=\{\lang\}$. 
}
\end{proof}

We now generalize the definition of ordered languages, by allowing several minimal words instead of one, and allowing each word to have several successors. The computation of such a language then matches a {\em relation} over the words, rather than a function. 

\begin{definition}
We say that a language $\lang$ is $m,k$-ordered, if there exists a relation $R\subseteq \lang\times\lang$ such that:
\begin{itemize}
    \item There exist exactly $m$ words $w\in\lang$ such that $(u,w)\notin R$ for every $u\neq w\in\lang$ (that is, there are $m$ minimal words).
    \item $R\subseteq S$ for a total order $S$ of $\lang$ with a minimal element. 
    \item For every $w\in\lang$ there exist $1\leq i\leq k$ successor words: words $u$ such that $(w,u)\in R$.  
    \item There exists an NFA $A_R$ over $(\Sigma\cup\{\#\})^{\{x,y\}}$ such that for every $u,v\in\lang$, it holds that $R(u,v)$ iff $\wass{w}{[x\mapsto u][y\mapsto v]}\in\lang(A_R)$.  
\end{itemize}
We then say that $A_R$ {\em computes $\lang$}.
We call $\lang$ {\em partially ordered}
if there exist $m,k$ such that $\lang$ is $m,k$-ordered.
\end{definition}


\begin{theorem}\label{thm:partial.order.realizable}
Let $\lang$ be $m,k$-ordered. Then $\{\lang\}$ is $\exists^m\forall\exists^k$-realizable. 
\end{theorem}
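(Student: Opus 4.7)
I plan to generalize the construction in the proof of Theorem~\ref{thm:ordered.realizable}. Let $w_1,\dots,w_m$ be the minimal words of $R$ and let $A_R$ be the NFA that computes $R$. I will set the quantification condition of the target NFH $\A$ to $\exists x_1\cdots\exists x_m\,\forall y\,\exists z_1\cdots\exists z_k$, and design the underlying NFA $\hat\A$ over $X=\{x_1,\dots,x_m,y,z_1,\dots,z_k\}$ to accept a word assignment iff: (1) $x_i = w_i$ for every $i\in[1,m]$; (2) $(y,z_i)\in R$ for every $i\in[1,k]$; and (3) every $R$-successor of $y$ appears among $z_1,\dots,z_k$ (with duplicates allowed when $y$ has fewer than $k$ distinct successors). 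This mirrors the role the two components $A_{w_1}$ and $A_f$ played in the ordered case, but with $m$ minimal witnesses and $k$ successor witnesses.

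Conditions (1) and (2) are handled essentially as before: compose the singleton NFAs $A_{w_i}[x_i]$ with $\otimes$, and compose $k$ renamed copies of $A_R$ that share the $y$-track. Condition (3) is the technical heart of the proof and is what forces the exponential blow-up. I would enumerate all idempotent maps $\pi:[1,k]\to[1,k]$ representing equivalence patterns on the positions $z_1,\dots,z_k$ (there are at most $k^k$ of them), and for each $\pi$ build an NFA that checks (a) $z_i=z_{\pi(i)}$ for every $i$, and (b) the representative set $\{z_{\pi(i)}\}$ is \emph{exactly} the set of $R$-successors of $y$. For (b), I would fix a canonical lex ordering of $\Sigma^*$, require the representatives to appear in strictly increasing order, and enforce the ``no gaps'' property---that no word strictly between two consecutive representatives, nor below the least or above the greatest, is an $R$-successor of $y$---by combining $A_R$ with its complement. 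The final NFA for (3) is the union over all patterns $\pi$.

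Correctness would then be shown in the two standard directions. For $\lang\in\hlang(\A)$: instantiate $x_i=w_i$, and for each $y\in\lang$ let $z_1,\dots,z_k$ be an enumeration of the $R$-successors of $y$ padded up to length $k$ with the last successor. Conversely, for any $\lang'\in\hlang(\A)$: condition (1) forces $\{w_1,\dots,w_m\}\subseteq\lang'$; condition (2), together with the fact that $A_R$ only accepts assignments in which both components are mapped to words of $\lang$, yields $\lang'\subseteq\lang$; and condition (3), combined with induction along the total order $S\supseteq R$ (so every word of $\lang$ is reachable from some minimal word by following $R$-edges), yields $\lang\subseteq\lang'$. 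The main obstacle is engineering condition (3): exhausting the successors is inherently a universal statement (``every $R$-successor of $y$ is named by some $z_i$''), and translating it into a single NFA running on $(y,z_1,\dots,z_k)$ requires both the branching over equivalence patterns and the use of the complement of $A_R$, which is the source of the exponential bound recorded for this row of Table~\ref{tab:realizability.summary}.
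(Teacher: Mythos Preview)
Your overall plan is correct and the correctness argument matches the paper's: force the $m$ minimal words via the $\exists^m$ block, force all $R$-successors of every $y$ into the language via the $\exists^k$ block, and conclude $\lang'=\lang$ by induction along the ambient total order $S$. The difference lies entirely in how you realise condition~(3).

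The paper does not use a lex order or a gap test. Instead it branches on the \emph{number} $i\in[1,k]$ of $R$-successors of the universally quantified word: the underlying NFA is a union $\bigcup_{i=1}^k A_i$, where $A_i$ (a) checks that $y$ has \emph{exactly} $i$ successors, and (b) checks that $y_1,\dots,y_i$ are $i$ pairwise-distinct successors of $y$ (the remaining $y_{i+1},\dots,y_k$ are tied to $y$). Part (a) is where complementation enters---``$y$ has more than $i$ successors'' is obtained by composing $i{+}1$ copies of $A_R$ on a common first track, requiring all $i{+}1$ second tracks to differ, and projecting onto the $y$-track; complementing and intersecting with the domain of $R$ gives ``at most $i$ successors''. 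Part (b) is just a product that records which pairs of $y$-tracks have already diverged. This is conceptually simpler than your order-based scheme: once you know $y$ has exactly $i$ successors, exhibiting $i$ distinct ones is automatically exhaustive, so no auxiliary ordering or gap analysis is needed.

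Your route also works, but the phrase ``combining $A_R$ with its complement'' undersells what the no-gaps check actually requires. The property ``no $R$-successor of $y$ lies strictly between $z_i$ and $z_{i+1}$'' is not expressed by $\overline{A_R}$ on any of the available tracks; you must first build the regular relation $T(y,a,b)\equiv \exists w.\,R(y,w)\wedge a<w<b$ by intersecting with the lex-order automaton and projecting out $w$, and only then complement. That is precisely the automatic-structures manoeuvre of Remark~\ref{rem:automatic}; fleshed out this way your construction is sound and also exponential, but it is a detour compared with the paper's direct counting test.
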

\begin{proof}
Let $A_R = \tuple{(\Sigma\cup\{\#\})^{\{x,y\}},Q,q_0,\delta,F}$ be a DFA that computes $\lang$. 
We construct an $\exists^m\forall\exists^k$-NFH $\A$ for $\lang$, as follows. 
The quantification condition of $\A$ is $\exists x_1 \cdots \exists x_m \forall z \exists y_1 \cdots \exists y_k$.
The $x$-variables are to be assigned $u_1,\ldots u_m$, the $m$ minimal words of $R$.
We set $A_U = \bigotimes_{i=1}^m A_{u_i}[x_1,\ldots x_m]$. 

The underlying NFA $\hat\A$ of $\A$ comprises of an NFA $A_i$ 
for every $1\leq i \leq k$. 
Let $\lang_i$ be the set of words in $\lang$ that have exactly $i$ successors. 
Intuitively, $A_i$ requires, for every word $w\in \lang_i$ that is assigned to $z$, the existence of the $i$ successors of $w$.

To do so, we construct an NFA $B_i$ over $z,y_1,\ldots y_i$ that accepts $\wass{w}{[z\mapsto w][y_1\mapsto w_1]\dots [y_i\mapsto w_i]}$, for every word $w$ and its successors $w_1,\ldots w_i$. 
The construction of $B_i$ requires that: (1) all assignment to $y$-variables are successors of the assignment to $z$, by basing $B_i$ on a composition of $A_R$, and (2)  $y_1,\ldots y_i$ are all assigned different words. This is done by keeping track of the pairs of assignments to $y$-variables that at some point read different letters. The run may accept only once all pairs are listed. We finally set $A_i = A_U\otimes B_i$, and require $y_{i+1}\ldots y_k$ to be equally assigned to $z$. 

\stam{
We first construct an NFA $A_{R,i}$ whose language is 
$\{w | w\in \lang_j, j\leq i\}$. That is, the set of words that have at most $i$ successors.

Let $S_i  = \{ \{i_1,i_2\} | i_{1,2}\in [1,i+1], i_1\neq i_2\}$.
We construct an NFA $B_{R,i}$ which, intuitively, follows the composition of size $i+1$ of $A_R$, while keeping track of the number of different words that are the successors of a single word $w$. Once it is found that there are more than $i$ different words that are matched with $w$, the run may accept. 

The states of $B_{R,i}$ are of the type $\tuple{\tuple{q_1,\ldots q_{i+1}}, S}$, where $S\subseteq S_i$. 
The initial state of $B_{R,i}$ is $\tuple{\tuple{q_0}^k,\emptyset}$.
For a state $ s = \{\tuple{q_1,\ldots q_{i+1}},S\}$ and $\sigma\in \Sigma$, we add a transition $\tuple{s,\{\sigma_x,\sigma_{1_{x_1}},\ldots \sigma_{i_{x_i}}\},s'}$, where $s'=\tuple{\tuple{p_1,p_2,\ldots p_{i+1}},S'}$, such that for every $j\in[1,i+1]$, it holds that $\tuple{q_j,\{\sigma_x,\sigma_{j_y}\},p_j}\in \delta$, and where $S' = S\cup\{\{i_1,i_2\}|\sigma_{i_1}\neq \sigma_{i_2}\}$. We pad by $\#$ the ends of words assigned to any of the variables when needed.
Notice that a state with $S=S_i$ is reachable only via a word assignment in which $x$ is assigned a word $w$, and $x_1,\ldots x_{i+1}$ are all assigned different words  $u_1,\ldots u_{i+1}$ such that $(w,u_i)\in R$ for every $j\in[1,i+1]$. The set of accepting states of $A_{R,i}$ is $\{\tuple{\tuple{q_1,\ldots q_{i+1}},S_i}| q_j\in F \forall j\in [1,i]\}$. Therefore, $B{_{R,i}\downarrow x}$ is $\{w|w\in \lang_j, j>i\}$. 
Finally, we set $A_{R,i} = \overline{B_{R,i}\downarrow x}\cap A_R\downarrow x$, which accepts all words that have at most $i$ successors in $R$. 

Using $A_{R,i}$ we proceed to construct $A_i$. 
Recall that $B_{R,i-1}$ represents the set of words in $\lang_j$ for $j\geq i$, each along with all combinations of $i$ of their successors, and the accepting runs are those in which all $i$ successors are different. Hence, we can use $B_i = B_{R,i-1}\cap_x A_{R,i}[(x,x_1,\ldots x_i)\rightarrow(z,y_1,\ldots y_i)]$ to represent the words in $\lang_i$ (assigned to $z$), along with all their successors (assigned to $y_1,\ldots y_i$. For $i=1$, we use $A_{R,1}\cap_x A_R$: in this case there is a single successor).

}
Since $R\subseteq S$ and $R$ has minimal elements, for every word $w\in\lang$ there exists a sequence $w_0,w_1,\ldots w_t$ such that $w_t = w$, and $w_0$ is minimal, and $R(w_j,w_{j+1})$ for every $j\in[0,t-1]$. The NFH $\A$ then requires $w_0$ (via $A_u$), and for every $w_j$, requires the existence of all of its successors, according to their number, and in particular, the existence of $w_{j+1}$ (via $A_i$). Therefore, $\A$ requires the existence of $w_t$. On the other hand, by construction, every word that is assigned to $z$ is in $\lang$. Therefore, $\hl(\A) = \{\lang\}$.

The size of $A_U$ is linear in $|U|$, and the size of $A_i$ is exponential in $k$ and in $|A|$. 
\end{proof}

\subsection{Realizability of Regular Languages}

We now show that every regular language $\lang$ is  partially ordered. 
To present the idea more clearly, 
we begin with a simpler case of prefix-closed regular languages, and then proceed to general regular languages.
A prefix-closed regular language $\lang$ has a DFA $A$ in which every accepting state is only reachable by accepting states. We use the structure of $A$ to define a relation that partially orders $\lang$. 

\begin{theorem}\label{thm:prefix.closed.realizable}
Let $\lang$ be a non-empty prefix-closed regular language. Then $\lang$ is partially ordered. 
\end{theorem}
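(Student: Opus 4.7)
The plan is to exhibit an explicit relation $R$ on $\lang$ that makes $\lang$ be $1,|\Sigma|$-ordered. Since $\lang$ is non-empty and prefix-closed, $\varepsilon \in \lang$, and we may assume without loss of generality a DFA $A=\tuple{\Sigma,Q,q_0,\delta,F}$ for $\lang$ in which every state reachable from $q_0$ is accepting (otherwise trim the non-accepting part: any word routed through a non-accepting state has a non-accepting prefix too, contradicting prefix-closedness). I would then define $R(u,v)$ to hold iff $v = u\sigma$ for some $\sigma \in \Sigma$ with $v \in \lang$.

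Next I would verify the four clauses of the definition of being $m,k$-ordered. For the minimal words: a non-empty $v \in \lang$ has its length-$(|v|-1)$ prefix $u$ in $\lang$ (by prefix-closedness), and $R(u,v)$ holds, so $v$ is not minimal; conversely $\varepsilon$ has no predecessor since every $R$-predecessor would be strictly shorter. Hence $m=1$. For the number of successors: the successors of $u$ are exactly the words $u\sigma \in \lang$ with $\sigma \in \Sigma$, giving at most $|\Sigma|$ per word (and at least one, as long as $u$ is not in a ``dead'' branch; the definition only requires \emph{at most} $k$ with $k = |\Sigma|$ — and where the count is $0$ one can add self-loops or simply absorb such words into the relation by redirecting through a nearby successor; more cleanly, one observes that the definition requires between $1$ and $k$ successors, so for finite $\lang$ the argument of Theorem~\ref{thm:finite.fe.realizable} can handle it, and for infinite prefix-closed $\lang$, König's lemma gives an infinite branch from $\varepsilon$ and one can restrict to the infinite subtree). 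For the containment in a total order: $R$ is contained in the shortlex order $S$ on $\Sigma^*$, which is a well-ordering of $\lang$ with $\varepsilon$ as the minimum.

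The remaining clause is the regularity of $R$, for which I would build $A_R$ over $(\Sigma\cup\{\#\})^{\{x,y\}}$ directly from $A$. The NFA reads pairs $\{\sigma_x,\tau_y\}$ and must check that (i) $\sigma = \tau$ as long as both letters are in $\Sigma$, (ii) exactly one extra $y$-letter is read after $x$ is padded with $\#$, and (iii) the $y$-component lies in $\lang$. I would take states $Q \times \{\text{same},\text{extra},\text{done}\}$: while in $\text{same}$ we consume $\{\sigma_x,\sigma_y\}$ and advance $A$'s component by $\delta(\cdot,\sigma)$; on seeing $\{\#_x,\sigma_y\}$ we move to $\text{extra}$, advance $A$ by $\sigma$, and then require the next symbol to be $\{\#_x,\#_y\}$, moving to $\text{done}$, which is accepting precisely when $A$'s component is in $F$. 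This yields $|A_R| = O(|A|)$, so by Theorem~\ref{thm:partial.order.realizable} the NFH for $\{\lang\}$ is polynomial in $|A|$, matching the entry in Table~\ref{tab:realizability.summary}.

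The main subtlety I expect is handling the corner case where a word in $\lang$ has no one-letter extension in $\lang$ — this arises, for instance, in finite prefix-closed languages. I would resolve it either by giving such words a self-loop in $R$ (still contained in a total order after a mild reformulation, or replaced by an extra clause handling terminal leaves via an additional $\exists$-variable that pins them down), or by appealing to Theorem~\ref{thm:finite.fe.realizable} for the finite case and carrying the argument above only for the infinite prefix-closed case, where every accepting state has an outgoing transition to an accepting state and so every word has at least one successor. Either way, the bound on $m$ and $k$ is independent of $|\lang|$, giving a polynomial-size $\exists\forall\exists^*$-NFH for $\{\lang\}$.
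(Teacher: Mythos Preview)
Your approach is essentially the paper's: define $R$ via one-letter extensions, take $\varepsilon$ as the sole minimal word, embed $R$ in the length-lexicographic order, and build $A_R$ by running $A$ on the $y$-component while enforcing letter-equality with the $x$-component for all but the last step. The paper takes $k$ to be the maximal out-degree (into accepting states) rather than $|\Sigma|$, but that is cosmetic.

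Where you hedge, the paper simply commits: for a word $w$ with no one-letter extension in $\lang$ it puts $(w,w)\in R$. This is unproblematic once $S$ is the \emph{non-strict} length-lexicographic order (so reflexive pairs lie in $S$), and the minimality clause in the definition reads ``$(u,w)\notin R$ for every $u\neq w$'', which self-loops do not disturb. Your alternative route---falling back to the infinite case and asserting that then every accepting state has an outgoing transition to an accepting state---is incorrect: $a^*\cup\{b\}$ is infinite and prefix-closed, yet $b$ is a dead end. ``Restricting to the infinite subtree'' does not help either, since the relation must cover all of $\lang$, not just one branch. So drop that branch of the argument and keep the self-loop fix; it is exactly what the paper does.

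One small slip in your $A_R$: for $u$ and $v=u\sigma$ the word assignment has length $|v|$ and its last letter is $\{\#_x,\sigma_y\}$; there is no trailing $\{\#_x,\#_y\}$ to consume. Your accepting state should be entered immediately upon reading $\{\#_x,\sigma_y\}$, with acceptance conditioned on $A$'s component being in $F$.
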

\begin{proof}
Let $A = \tuple{\Sigma,Q,q_0,\delta,F}$ be a DFA for $\lang$. Then for every $p,q\in Q$, if $q\in F$ and $q$ is reachable from $p$, then $p\in F$. 
Let $k$ be the maximal number of transitions from a state $q\in F$ to its neighboring accepting states.
We show that $\lang$ is $1,k$-ordered.
We define a relation $R$ as follows. 
Since $\lang$ is prefix-closed, we have that $\varepsilon\in\lang$. We set it to be the minimal element in $R$. 

Let $q\in Q$, and let $\{(q,\sigma_1,p_1),\ldots (q,\sigma_m,p_m)\}$ be the set of transitions in $\delta$ from $q$ to accepting states. 
For every word $w\in\lang$ that reaches $q$, we set $(w,w\sigma_1),\ldots (w,w\sigma_m)\in R$. 
For every word $w\in\lang$ that reaches a state $q$ from which there are no transitions to accepting states, we set $(w,w)\in R$. 

It holds that the number of successors for every $w\in\lang$ is between $1$ and $k$.
Further, $R\subseteq S$ for the length-lexicographic order $S$ of $\lang$. 

We construct an NFA $A_R$ for $R$ by replacing every transition labeled $\sigma$ with $\{\sigma_x, \sigma_y\}$ and adding a state $p'$, which is the only accepting state. For every $q\in F$, we add a transition $(q,\{\#_x,\sigma_y\},p')$ for every $(q,\sigma,p)\in\delta$ such that $p\in F$. If $q$ has no transitions to accepting states in $A$, then we add $(q,\{\#_x,\#_y\},p')$.
$A_R$ then runs on word assignments $\wass{w}{[x\mapsto w][y\mapsto u]}$ such that $u=w\sigma$ for some $\sigma$, such that $w,u\in \lang$, or $\wass{w}{[x\mapsto w][y\mapsto w]}$ if $w$ cannot be extended to a longer word in $\lang$. Therefore, $A_R$ computes $\lang$. 
\end{proof}
\begin{remark}
The construction in the proof of Theorem~\ref{thm:partial.order.realizable} is exponential, due to the composition of several automata. In the case of prefix-closed languages, the successors of a word $w\in \lang$ are all of the type $w\sigma$. Therefore, it suffices to extend every transition in $A$ to $\{\sigma_{x_1},\sigma_{x_2},\ldots \sigma_{x_k}\}$, and to add a transition from every $q\in F$ to a new accepting state with all letters leading from $q$ to an accepting state. Composed with a single-state DFA for $\epsilon$, we get an $\exists\forall\exists^k$-NFH for $\{\lang\}$, whose size is polynomial in $|A|$.  
\end{remark}

We now turn to prove the realizability of $\{\lang\}$ for every regular language $\lang$.
The proof relies on a similar technique to that of Theorem~\ref{thm:prefix.closed.realizable}: a relation that computes $\{\lang\}$ requires, for every word $w\in\lang$, the existence of a longer word $w'\in \lang$. Here, $w'$ is not simply the extension of $w$ by a single letter, but a pumping of $w$ by a single cycle in a DFA for $\lang$. 

\begin{theorem}\label{thm:regular.realizable}
Let $\lang$ be a regular language. Then $\{\lang\}$ is 
partially ordered. 
\end{theorem}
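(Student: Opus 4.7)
The plan is to show $\lang$ is $1,1$-ordered by taking $R$ to be the length-lexicographic (llex) successor relation on $\lang$, augmented with a self-loop at the llex-maximum of $\lang$ in the case that $\lang$ is finite. Formally, $(w,w')\in R$ iff $w,w'\in\lang$ and either (a) $w<_{\text{llex}}w'$ with no $u\in\lang$ satisfying $w<_{\text{llex}}u<_{\text{llex}}w'$, or (b) $w=w'$ and $w$ is the llex-maximum of $\lang$. Under this definition, the unique minimal word of $R$ is the llex-minimum of $\lang$ (so $m=1$); every word has exactly one successor (so $k=1$); $R$ is contained in the non-strict llex total order, which has a minimum element; and every word in $\lang$ is reachable from the minimal word by iterating $R$, by induction on llex-position.

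The main task is then to construct an NFA $A_R$ over $(\Sigma\cup\{\#\})^{\{x,y\}}$ that computes $R$. Given a DFA $A$ for $\lang$, I would assemble $A_R$ from standard building blocks: copies of $A$ lifted to the $x$-track and to the $y$-track (enforcing $\lang$-membership of the words assigned to $x$ and $y$, respecting $\#$-padding), together with a small NFA $A_{<}$ computing $x<_{\text{llex}}y$ by first comparing padded lengths and then breaking ties at the first differing letter. Their intersection recognizes $\{(w,w')\,:\,w,w'\in\lang,~w<_{\text{llex}}w'\}$.

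The hard part will be the ``no intermediate word'' clause in (a). I would build a three-track NFA $B$ over $x,y,z$ accepting iff $z\in\lang$ and $x<_{\text{llex}}z<_{\text{llex}}y$, project away the $z$-track to obtain a two-track NFA on $(x,y)$, then determinize and complement; the exponential blowup at this step is the source of the size bound recorded in Table~\ref{tab:realizability.summary}. Intersecting the complement with the NFA from the previous paragraph handles clause (a). Clause (b) is treated analogously, by complementing the projection onto $x$ of the NFA that accepts $y\in\lang\wedge x<_{\text{llex}}y$ and restricting to $x=y$. The union of the two components is $A_R$, which certifies that $\lang$ is $1,1$-ordered and therefore partially ordered.
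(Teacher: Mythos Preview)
Your argument is correct, but it is not the route the paper takes in its proof of this theorem; rather, it is essentially the alternative the paper sketches in Remark~\ref{rem:automatic}. The paper's own proof is structural: it fixes a DFA $A$ for $\lang$, takes as minimal words the (finitely many) words accepted along simple paths of $A$, and defines $R$ so that the successors of $w$ are all one-cycle pumpings of $w$ (plus $w$ itself). This yields an $n,m$-ordering with $n,m$ possibly exponential in $|A|$, and an $A_R$ built directly from the cycle structure of $A$; combined with Theorem~\ref{thm:partial.order.realizable} one gets an $\exists^n\forall\exists^m$-NFH of doubly-exponential size. Your llex-successor construction instead gives a $1,1$-ordering and hence an $\exists\forall\exists$-NFH, at the cost of a determinize-and-complement step after projecting out the witness track. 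The paper explicitly acknowledges that this automatic-structures route is ``smaller and simpler'' but chooses the structural proof because it wants techniques that exploit the underlying automaton graph for later generalizations.

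One small misattribution: the doubly-exponential entry in Table~\ref{tab:realizability.summary} records the cost of the paper's structural construction, not of the llex approach. Your single determinize/complement yields only a single exponential in $|A|$ for $A_R$, and since $m=k=1$ the blow-up in Theorem~\ref{thm:partial.order.realizable} disappears; so your route actually beats the bound in the table (and, per Remark~\ref{rem:automatic}, can be pushed further). Otherwise the proposal is sound: the llex order is a well-order with a minimum, the successor-with-self-loop relation has exactly one minimal element and exactly one successor per word, and all pieces you assemble (membership in $\lang$, $<_{\text{llex}}$, the projected-and-complemented ``no intermediate $z$'' predicate) are standard regular relations on $\#$-padded tuples.
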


\begin{proof}
Let $A = \tuple{\Sigma,Q,q_0,\delta,F}$ be a DFA for $\lang$. We mark by $P$ the set of words that reach accepting states from $q_0$ along a  simple path. 
For a state $q\in Q$, 
we mark by $C_q$ the set of words that reach $q$ from $q$ along a simple cycle. Note that $P$ and $C_q$ are finite for every $q\in Q$. Let 
$n =|P|$, and let $m= \Sigma_{q\in Q}|C_q|$. 
We show that $\lang$ is $n,m$-ordered, by defining an appropriate relation $R$. 

The set of minimal words in $R$ is $P$. The successors of a word $w\in\lang$ are $w$ itself (that is, $R$ is reflexive), and every possible pumping of $w$ by a single simple cycle that precedes all other cycles within the run of $A$ on $w$. That is, for a state $q$ that is reached by a prefix $u$ of $w$ along a simple path, and for a word $c$ read along a simple cycle from $q$ to itself, the word $ucv$ is a successor of $w$ in $R$, where $w=uv$. 

To see that the only minimal words in $R$ are $P$,
let $w=\sigma_1\sigma_2\cdots\sigma_k\in\lang$, and let $r = (q_0,q_1,\ldots q_k)$ be the accepting run of $A$ on $w$. 
If all states in $r$ are unique, then $w\in P$. Otherwise, we set $w_t=w$, and repeatedly remove simple cycles from $r$: let $j$ be a minimal index for which there exists $j'>j$ such that $q_j=q_{j'}$ and such that $q_{j+1},\ldots q_{j'}$ are unique. We define $w_{i-1} = w_1\cdots w_j w_{j'+1}\cdots w_k$. We repeat this process until we reach a run in which all states are unique, which matches a word $w_0\in P$. The sequence of words $w_t, w_{t-1}, \ldots w_0$ we obtain is such that $(w_i,w_{i+1})\in R$ for every $i\in[0,t-1]$. 
\stam{
We show that there exist words $w_0,w_1, \ldots w_t$ such that $w_0\in P_{q_k}$ and $w_t=w$, and such that $(w_i,w_{i+1})\in R$. We start with $w_i=w_t$. If all states in $r$ are unique, then $w_0=w\in P_{q_k}$, and we are done. Otherwise, let $j<j'$ be such that $q_j=q_{j'}$, and such that $q_{j+1},\ldots q_{j'}$ are unique. We define $w_{i-1} = w_1\cdots w_j w_{j+1}\cdots w_k$. That is, we remove the subword between $q_j$ and $q_{j'}$. By the construction of $R$, we have $(w_{i-1},w_i)\in R$. We repeat the process until we reach a word for which all states are unique, and is therefore in $P_{q_k}$. 
}

It is easy to see that $R\subseteq S$ for the length-lexicographical order $S$ of $\lang$.
Additionally, every $w\in\lang$ has between $1$ and $m$ successors. 
We now construct an NFA $A_R$ for $R$. 

$A_R$ is the union of several components, described next. 
Let $A_q$ be the DFA obtained from $A$ by setting its only accepting state to be $q$.
For every $p\in Q$ and for every $c\in C_p$, we construct an NFA $B_{c,q}$, which pumps a word read along a run that reaches $q$ and traverses $p$, by $c$. The NFA $B_{c,q}$ comprises two copies $A_1,A_2$ of $A$, where the copy $q_2$ of $q$ in $A_2$ is the only accepting state. The word $c$ is read between $A_1$ to $A_2$, from $p_1$ and $p_2$. 

We construct an NFA $A_{c,q}$ by composing
$B_{c,q}$ and $A_q$, and making sure that $B_{c,q}$ reads the same word as $A_q$, pumped by $c$. That is, if $A_q$ reads a word $uv$, where $u$ reaches $p$, then $B_{c,q}$ reads $ucv$. To this end, while $B_{c,q}$ is in $A_1$, the DFA $A_q$ and $B_{c,q}$ both advance on the same letters. When $B_{c,q}$ leaves $A_1$ to read $c$ followed by the suffix $v$ in $A_2$, the composition remembers, via states, the previous (up to) $|c|$ letters read by $A_q$, to make sure that once $B_{c,q}$ finishes reading $uc$, it reads the same suffix $v$ as $A_q$ did.
The NFA $A_R$ is then the union of $A_{c,q}$ for every $q\in Q, c\in \bigcup_{p\in Q} C_p$.
To accept the reflexive pairs as well, we union all the components with an additional component $A\oplus A$. 

The size of every $A_{c,q}$ is exponential in $c$, due to the need to remember the previous $c$ letters. There are exponentially many simple paths and cycles in $A$. Therefore, we have that the size of $A_R$ is exponential in $|A|$. Combined with the exponential blow-up involved in the proof of Theorem~\ref{thm:partial.order.realizable}, we have that an NFH for $\{\lang\}$ is doubly-exponential in the $|A|$.   
\end{proof}

\begin{remark}\label{rem:automatic}
Using automatic structures~\cite{BN95} and relying on the length-lexicographical order $S$, one can prove the existence of an $\exists\forall\exists$-NFH $\A$ for $\{\lang\}$, which is smaller and simpler than the one we present in Theorem~\ref{thm:regular.realizable}. Indeed, one can phrase the direct successor relation in $\lang$ with respect to $S$ using the First Order Logic (FOL) formula $\varphi(x,y) = \lang(x)\wedge \lang(y)\wedge S(x,y) \wedge \forall(z). (z\neq y)\rightarrow (\neg(S(x,z)\wedge S(z,y)))$.
Since $S$ is NFA-realizable, and since every relation expressible by FOL over an automatic structure is regular~\cite{BN95}, we have that $\varphi$ is NFA-realizable. We can then construct $\A$, requiring the existence of a minimal word in $\lang$ with respect to $S$, together with the requirement of the existence of a successor for every $w\in\lang$. 

While this construction is polynomial, it does not directly rely on the structure of $A$. Since in this paper we wish to lay the ground for richer realizable fragments, in which relying on the underlying graph structures may be useful, we present it here. 
\end{remark}

\stam{
Let $\mathcal{L} = \{ L_i | i\in I \}$  be a hyperlanguage. When can we express $\mathcal{L}$ using a hyper automaton?

Informally: 
\begin{enumerate}
    \item  $\mathcal{L} = \{ L \}$ where $\lang = \{w_1, \ldots, w_k  \}$ a finite language, can be expressed using a $\forall\exists$ automaton with $k$ paths: $w_1, w_2$ to $w_k, w_1$. 
    \item  $\mathcal{L} = \{ L\}$ where $L$ is an infinite language, cannot be accepted by a $\forall\exists$ automaton (since if there is  a cycle of words  then also the finite language is accepted, and otherwise we can omit words and the language will still be accepted) 
    \item $\forall$ realizability - closed under subsets, $\exists$ realizability closed under supersets
    \item Let $ L \models \forall\exists\forall$ condition. Then $L \models \forall\exists$. Then, there exists a finite(???) $L' \subset L$ such that $L'\models \forall\exists$ (due to the cycle of words???) and thus $L ' \models \forall\exists\forall$. Thus, such a quantifying condition cannot accept any  single infinite language. 
    \item $\exists\forall\exists$: for ordered languages, we can union and concatenate. what about *?  
    \item given a DFA with a universal quantification, we get the hyper language of all of its subsets. with an existential quantification we get all intersections. What about all supersets? 
\end{enumerate}
}
\section{Context-Free Hypergrammars}\label{sec:CFGH}

We now go beyond regular hyperlanguages, and define and study {\em context-free hyperlanguages}. We begin with a natural definition for context-free hypergrammars (CFHG), based on the definition of NFH, and then identify a more decidable fragment of CFHG, namely {\em synchronized CFHG}.

\begin{definition}\label{def:CFHG}
A {\em context-free hypergrammar} (CFHG) is a tuple $\la\Sigma,X,V,V_0,P,\alpha\ra$, where $X$ and $\alpha$ are as in NFH, and where  $\hat{G}=\la{\hat\Sigma, V, V_0, P\ra}$ is a CFG over the alphabet $\hat\Sigma = (\Sigma\cup\{\#\})^X$.
\end{definition}
Definition~\ref{def:nfh} defines word assignments for NFH, where the $\#$-symbol may only appear at the end of a word. This is naturally enforced by the nature of the underlying NFA. 
For the most general case of hypergrammars, we consider words in which $\#$ can appear anywhere in the word. In Section~\ref{sec:sync} we allow $\#$ to occur only at the end of the word. For a word $w\in \Sigma^*$ we define the \emph{set of words} $w\uparrow_{\#}$ to be the set of all words that are obtained from $w$ by adding $\#$-symbols in arbitrary locations in $w$. For $w\in (\Sigma \cup \{\#\})^*$, we define the \emph{word} $w\downarrow_{\#}$ to be the word obtained from $w$ by removing all occurrences~of~$\#$. 
\stam{ 
Intuitively, a word $\hat w$ over $\hat\Sigma$ represents a set of words over $\Sigma$ as follows. Let $|X| = n$, and let $\hat{w} \in \hat{\Sigma}$ be $\hat{w} = \la w_1, \ldots, w_n  \ra $ such that $w_i\in \Sigma\cup \{ \#\}$. 
For each $w_i$, let $w_i'$ be the word that we obtained from $w_i$ be removing all occurrences of $\#$. Then, $\hat{w}$ represents the set $\{ w_1' , \ldots , w_n' \}$. Formally, we use the same word assignment $\wass{w}{u}$ given in Definition~\ref{def:nfh}, but we note that $\#$ can now appear anyway in the word. Then, we define the restriction of a word assignment  $\wass{w}{u}\downarrow_{\#}$ to be the word assignment that removes all occurrences of $\#$. We are now ready to define the language of a CFHG. 
\hadar{arrange notations here}
}

The acceptance condition for CFHG is defined with respect to a language $\lang$, the underlying CFG $\hat{G}$, the quantification condition $\alpha$, and an assignment $u:X\rightarrow \lang$. 

\begin{enumerate}
    \item For $\alpha = \epsilon$, define $\lang\vdash _v (\alpha,\hat{G})$ if $\wass{w}{u}\in\lang(\hat{G})$. \label{item:base}

\item For $\alpha = \exists x. \alpha'$, define $\lang\vdash_u (\alpha, \hat{G})$ if there exist $w\in \lang$ and $w_{\#}\in w\uparrow_{\#}$
s.t. $\lang \vdash_{u[x\mapsto w_{\#}]} (\alpha',\hat{G})$. \label{item:exists}

\item For $\alpha = \forall x. \alpha'$, define $\lang\vdash_u (\alpha,\hat{G})$ if
for every $w\in \lang$ there exists $w_{\#}\in w\uparrow_{\#}$ s.t.
$\lang \vdash_{u[x \mapsto w_{\#}]}  
(\alpha',\hat G)$.\label{item:forall}
\end{enumerate}

When $\alpha$ includes all of $X$, 
we say that {\em ${G}$ derives $\lang$} (or that ${G}$ accepts $\lang$), and denote $\lang\in\hlang({G})$.

\begin{definition}
Let ${G}$ be a CFHG. The {\em hyperlanguage} of ${G}$, denoted $\hlang{(G)}$, is 
the set of all languages that ${G}$ derives. We denote ${G}$ as being a $\quant_1\quant_2 \dots \quant_k$-CFHG similarly as with NFH.
\end{definition}


Henceforth we assume that (1) the underlying grammar $\hat{G}$ does not contain variables and rules that derive no terminal words (these can be removed); and (2) there are no rules of the form $v\rightarrow \varepsilon$
except for possibly $V_0 \rightarrow \varepsilon$. Every CFG can be converted to a CFG that satisfies these conditions~\cite{DBLP:books/daglib/Hopcroft}. 

 \begin{example}\label{ex:CFHG}
Consider the robot scenario described in Section~\ref{sec:intro}, and the $\forall x$-CFHG $G_1$ with the rules
\begin{align*}
   & P_1:= V_0\rightarrow \{ c_x\} V_0 \{ a_x\} ~|~\{  c_x \} V_1  \\
    & V_1\rightarrow \tuplevar{c_x} V_1  ~|~ \{c_x\}
\end{align*}
The letters $a$ and $c$ correspond to \emph{action} and \emph{charge}, respectively. Then, 
$\hlang(G_1)$ is the set of all languages in which the robot has enough battery to act.

Consider now the CFHG 
$G_2 = \la \{ a, c \}, \{ x_1, x_2\},\{ V_0, V_1\},V_0,P_2,\exists x_1 \forall x_2\ra$ where 
\begin{align*}
   & P_2:= 
   V_0 \rightarrow \tuplevar{c_{x_1},c_{x_2}} V_0 \tuplevar{a_{x_1},a_{x_2}} ~|~ \tuplevar{c_{x_1}, c_{x_2}} V_1 \tuplevar{a_{x_1}, \#_{x_2} } ~|~ \tuplevar{c_{x_1}, c_{x_2}} V_1 \tuplevar{a_{x_1}, a_{x_2} } 
   \\
    & V_1\rightarrow \tuplevar{c_{x_1}, \#_{x_2}} V_1 \tuplevar{a_{x_1}, \#_{x_2}} ~|~ \tuplevar{c_{x_1}, \#_{x_2}}~|~ \tuplevar{c_{x_1}, c_{x_2}}
\end{align*}
We now require that the robot only has one additional unit of charging (unlike in $G_1$). In addition, we require an upper bound (assigned to $x_1$) on the charging and action times. All other words in the language (assigned to $x_2$) correspond to shorter computations. 
\end{example}

We now study the nonemptiness and membership problems for CFHGs. 
When regarding a CFHG as a specification, these correspond to the model-checking and satisfiability problems. 



\begin{theorem} \label{thm:ExistsRankedEmptiness}\label{thm:existsemptiness}
The nonemptiness problem for $\exists^*$-CFHG is in \comp{P}. 
\end{theorem}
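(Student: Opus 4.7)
The plan is to reduce nonemptiness of an $\exists^*$-CFHG $G=\la\Sigma,X,V,V_0,P,\alpha\ra$ to nonemptiness of its underlying CFG $\hat{G}$, which is a classical polynomial-time problem (e.g., via the standard marking algorithm that identifies productive variables). The quantification condition $\alpha=\exists x_1\cdots\exists x_k$ is purely existential, so satisfaction of $\lang\vdash(\alpha,\hat{G})$ by the definition above just requires the existence of padded words $w_{1,\#},\dots,w_{k,\#}$ projecting to members of $\lang$ such that the joint word assignment lies in $\lang(\hat{G})$; no universal check needs to be verified.

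First I would prove the easy direction: if $\hlang(G)\neq\emptyset$ and $\lang\in\hlang(G)$, then unwinding clauses \ref{item:exists} and \ref{item:base} of the acceptance definition yields some $\hat w\in\lang(\hat G)$, so $\hat G$ is nonempty. Next, the interesting direction: assume $\hat G$ is nonempty and pick any $\hat w\in\lang(\hat G)$. For each $i\in[1,k]$, let $w_{i,\#}$ be the projection of $\hat w$ onto variable $x_i$, and set $w_i=w_{i,\#}\downarrow_{\#}\in\Sigma^*$. Define
\[
\lang=\{w_1,\dots,w_k\}.
\]
If all $w_i=\varepsilon$ we simply have $\lang=\{\varepsilon\}$, which is nonempty (as required by the footnote on $\forall$-free quantifications, we need $\lang$ to be nonempty, and this is automatic because each $x_i$ has a witness). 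The assignment $u$ with $u(x_i)=w_i$ and padded witnesses $w_{i,\#}\in w_i\uparrow_{\#}$ satisfies $\wass{w}{u}=\hat w\in\lang(\hat G)$, so by clause \ref{item:base} we get $\lang\vdash_u(\epsilon,\hat G)$, and iterating clause \ref{item:exists} back up the quantifier prefix yields $\lang\vdash(\alpha,\hat G)$. Hence $\lang\in\hlang(G)$.

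Having established that $\hlang(G)\neq\emptyset$ iff $\lang(\hat G)\neq\emptyset$, the theorem follows from the classical polynomial-time algorithm for CFG emptiness applied to $\hat{G}$, noting that the size of $\hat G$ is exactly the size of $G$. The only step requiring care is the choice of witnesses $w_{i,\#}$: we need to confirm that the padded projections extracted from $\hat w$ really lie in $w_i\uparrow_{\#}$, which is immediate from the definitions of the projection and of $\uparrow_{\#}$. There is no genuine obstacle here — the purely existential quantification makes the problem collapse to plain CFG emptiness — and the main point of the proposition is to make this reduction explicit as a baseline before tackling the harder $\forall^*$ and $\exists^*\forall^*$ fragments in the sequel.
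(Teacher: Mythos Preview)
Your proposal is correct and follows essentially the same approach as the paper: reduce nonemptiness of $G$ to nonemptiness of the underlying CFG $\hat G$ via the observation that a word $\hat w\in\lang(\hat G)$ yields the witnessing language $\{w_1,\dots,w_k\}$ of its $\#$-stripped projections, and conversely any accepted language produces an accepted word assignment. The paper's published proof is a two-line sketch of exactly this reduction; your write-up simply spells out the two directions in more detail (and your slightly awkward reference to the footnote is harmless, since for a purely existential prefix the constructed $\lang$ is automatically nonempty).
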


\begin{proof} 
According to the semantics of the $\exists$-requirement, an $\exists^*$-CFHG $G$ derives a language $\lang$ if $\hat G$ accepts a word assignment that corresponds to words in $\lang$. Therefore, it is easy to see that $G$ is nonempty iff $\hat G$ is nonempty. 
Since the nonemptiness of CFG is in \comp{P}~\cite{DBLP:books/daglib/Hopcroft}, we are done. 
\stam{
Let $G = \la\Sigma,X,V,V_0,P,\exists^*\ra$ be a CFHG, and let $\hat{G}$ be the underlying CFG. 
Let $\alpha = \exists x_1 \ldots \exists x_k$. 
Intuitively, any word $\hat{w}\in ((\Sigma \cup \{\# \} )^k)^*$ that can be derived by $\hat{G}$, corresponds to a language of size up to $k$ in $\hlang(G)$, and vice versa.\footnote{The corresponding language would be of size smaller than $k$ in case ore than one variable is assigned with the same word.}

We now formalize this intuition and show that $\hlang{(G)} \neq \emptyset$ iff $\mathcal{L}(\hat G) \neq\emptyset$.

$\bm{\Rightarrow.}$
Let $S\in\hlang{(G)}$. We show that there exists
$\wass{w}{[x_1\mapsto \hat{w_1}]\cdots[x_k\mapsto \hat{w_k}]} \in\reglang{\hat{G}} $ 
by inductively defining $\hat{w_i}$. 
Since $S\in \hlang{(G)}$ it holds that 
$S \vdash _u (\exists x_{1} \cdots \exists x_k,\hat{G})$. That is, there exists $w_1\in S$ and $w_{1\#} \in w_1\uparrow_{\#}$ such that  $S \vdash_{u[x_{1}\mapsto w_{1\#}]} (\exists x_{2} \cdots \exists x_k,\hat{G})$. We set $\hat{w_1} = w_{1\#}$, and inductively define $\hat{w_i}$ in the same manner for $1<i\leq k$. 
After defining $\hat{w_k}$, we get that $S \vdash_u(\epsilon, \hat{G})$ for the word assignment
$\wass{w}{u} = \wass{w}{[x_1\mapsto \hat{w_1}]\cdots[x_k\mapsto \hat{w_k}]}$, 
and by definition we have $\wass{w}{u}\in \reglang{\hat{G}}$.

$\bm{\Leftarrow.}$
Let $\wass{w}{[x_1\mapsto \hat{w_1}]\cdots[x_k\mapsto \hat{w_k}]} \in \reglang{\hat{G}}$, let $S_{{\hat w}} = \{ \hat{w_1}\downarrow_{\#}, \ldots, \hat{w_k}\downarrow_{\#}\}$, and consider the word assignment $\wass{w}{u} = \wass{w}{[x_1\mapsto \hat{w_1}]\cdots[x_k\mapsto \hat{w_k}]}$.
We fix $k$  
and show by induction on the number of existential quantifiers, $m\leq k$, that $S_{\hat{w}} \vdash _u (\exists x_{k-m+1} \cdots \exists x_k,\hat{G})$. 
For $m = k$, we conclude that 
$S_{\hat{w}} \vdash _u (\exists x_{1} \cdots \exists x_k,\hat{G})$ and thus
$S_{\hat{w}} \in \hlang{(G)}$. 

\emph{Base case, $m=0$}. Since $\wass{w}{u} \in \reglang{\hat{G}}$ we have $S_{\hat{w}} \vdash _u (\epsilon,\hat{G})$.

\emph{Induction step, $0<m\leq k$.} 
According to the induction hypothesis, we have $S_{\hat{w}} \vdash_u (\exists x_{k-(m-1)+1} \cdots \exists x_k,\hat{G})$.
Since  $[x_{k-m+1} \mapsto w_{k-m+1}]$, it holds that there exists $w_{k-m+1}\downarrow_{\#} \in S_{\hat{w}}$, and there exists $w_{k-m+1} \in (w_{k-m+1}\downarrow_{\#})\uparrow_{\#} $ such that $S_{\hat{w}} \vdash_{u[x_{k-m+1}\mapsto w_{k-m+1}]} (\exists x_{k-(m-1)+1} \cdots \exists x_k,\hat{G})$, due to the induction hypothesis. That is 
$S_{\hat{w}} \vdash_v (\exists x_{k-m+1} \exists x_{k-(m-1)+1} \cdots \exists x_k,\hat{G})$, as needed. 
}
\end{proof}


\begin{theorem} \label{thm:regmodelcheck}
The membership problem for a regular language in an $\exists^*$-CFHG is in \comp{EXPTIME}.
\end{theorem}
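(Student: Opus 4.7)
Let $G=\la\Sigma,X,V,V_0,P,\exists x_1\cdots\exists x_k\ra$ be an $\exists^*$-CFHG and let $\lang$ be a regular language given by an NFA $A=\la\Sigma,Q,q_0,\delta,F\ra$. By the semantics of the $\exists$-quantifier (items~\ref{item:base} and~\ref{item:exists} in the acceptance definition), $\lang\in\hlang(G)$ iff there is a word assignment $\wass{w}{u}\in\lang(\hat G)$ such that for every $x_i\in X$, the projection $u(x_i)\downarrow_{\#}$ lies in $\lang$. The plan is to reduce this to CFG nonemptiness via a product between the underlying CFG $\hat G$ and an NFA that checks membership in $\lang$ of each of the $k$ projections.

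First I would construct an NFA $B$ over $\hat\Sigma=(\Sigma\cup\{\#\})^X$ whose language is exactly the set of word assignments $\wass{w}{u}$ such that $u(x_i)\downarrow_{\#}\in\lang$ for every $i\in[1,k]$. The states of $B$ are $k$-tuples $\la q_1,\ldots,q_k\ra\in Q^k$, with initial state $\la q_0,\ldots,q_0\ra$ and accepting set $F^k$. On a letter $\{\sigma_{1_{x_1}},\ldots,\sigma_{k_{x_k}}\}\in\hat\Sigma$, component $i$ advances according to $\delta(q_i,\sigma_i)$ when $\sigma_i\in\Sigma$, and stays in place when $\sigma_i=\#$ (since $\#$ is stripped from the projection). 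Thus $B$ simulates $k$ independent copies of $A$, one per projection, and $|B|\in O(|Q|^k)$, which is exponential in the input.

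Next, I would apply the standard Bar--Hillel--style product of a CFG with an NFA to obtain a CFG $G'$ whose language is $\lang(\hat G)\cap\lang(B)$. The size of $G'$ is polynomial in $|\hat G|\cdot|B|^2$, hence in $|\hat G|\cdot|Q|^{2k}$, which is again exponential in the input. Finally, I would invoke the polynomial-time CFG nonemptiness algorithm on $G'$ and return its answer. Correctness follows from the equivalence stated above: $G'$ is nonempty iff there is a common word $\wass{w}{u}\in\lang(\hat G)$ with all projections in $\lang$, which is exactly the semantic characterization of $\lang\in\hlang(G)$ under the $\exists^*$-condition.

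The total running time is polynomial in the size of $G'$, so overall exponential in the original input, giving the desired \comp{EXPTIME} bound. I do not anticipate a serious obstacle beyond setting up $B$ carefully: the only subtlety is that, unlike the synchronous case considered later in Section~\ref{sec:sync}, $\#$ may appear at arbitrary positions in a projection, which is precisely why $B$ must allow ``stuttering'' of component $i$ on letters where $\sigma_i=\#$. Once $B$ is in place the argument is routine, and the exponential blow-up is unavoidable because $B$ must simultaneously track $k$ independent runs of $A$.
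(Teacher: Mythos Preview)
Your proposal is correct and follows essentially the same approach as the paper: the NFA $B$ you build is exactly the paper's $(A\uparrow{\#})^{\otimes k}$ (add a $\#$ self-loop at every state of $A$ and take the $k$-fold composition), and both then intersect with $\hat G$ and test the resulting CFG for nonemptiness. Your semantic characterization of $\lang\in\hlang(G)$ under $\exists^*$ matches the paper's, and the complexity accounting is the same.
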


\begin{proof}
Let $A=\tuple{\Sigma,Q,Q_0,\delta,F}$ be an NFA and let $G$ be a CFHG with $\alpha = \exists x_1 \cdots\exists x_k$. 
In order to check whether $\reglang{A} \in \hlang(G)$, we need to check whether there exists a subset of $\reglang{A}$ of size $k$ or less, that can be accepted as a word assignment by $\hat{G}$. Since $\hat{G}$ derives words over $\Sigma\cup\{\#\}$, we first construct the NFA $A\uparrow{\#}$, that accepts all $\#$-paddings of words in $\lang(A)$. We can do so easily by adding a self-loop labeled $\#$ to every state in $A$. 
We then compute $(A\uparrow{\#})^{\otimes k}$ to allow different paddings for different words (an exponential construction), intersect the resulting automaton with $\hat{G}$ and test the intersection for nonemptiness.
Context-free languages are closed under intersection with regular languages via a polynomial construction. 
In addition, if the grammar is given in Chomsky Normal Form~\cite{DBLP:journals/iandc/Chomsky59a}, then checking the emptiness of the intersection is polynomial in the sizes of the grammar and automaton. As the conversation to Chomsky normal form is also polynomial, we get that the entire procedure is exponential, due to the size of $(A\uparrow{\#})^{\otimes k}$. 
\stam{
We now describe the process in detail. 
We construct the NFA $A_{\#}=\tuple{\Sigma\cup\{\#\},Q,q_0,\delta_{\#},F}$ as follows. For all $\sigma\in\Sigma, q, q'\in Q$ we define $\tuple{q, \sigma, q'} \in \delta \leftrightarrow \tuple{q, \sigma, q'} \in \delta_{\#}$. That is, the two automata behave the same for all letters of the original alphabet $\Sigma$. In addition, for each $q\in Q$ we define $\tuple{q, \#, q}\in \delta_{\#}$. That is, we allow additions of $\#$ anywhere in the automaton. Note that $\#$ does not replace any letter, but only added in self loops. We have that $\reglang{A_{\#}} = \{ w\in (\Sigma\cup\{\#\})^*  ~|~ \exists w' \in \reglang{A}.~w' = w\downarrow_{\#} \} $. 

Now, let $A_{\#}^{\otimes k}$ be the self composition of $A_{\#}$ with itself $k$ times (see Section~\ref{sec:realizability} for the definition of composition). A word assignment 
$\{x_1\mapsto w_1, \ldots,  x_k\mapsto w_k\}$ is accepted by $A_{\#}^{\otimes k}$ iff $\forall i\in [1,k]: w_i\in\lang(A_{\#})$, iff $\forall i\in [1,k]: w_i\downarrow_{\#}\in\lang(A)$. 
In addition, we convert the CFG $\hat{G}$ to a push-down automaton $A_{\hat{G}}$ that accepts the same language over the alphabet $(\Sigma\cup\{ \# \})^k$, and intersect $A_{\#}^{\otimes k}$ and $A_{\hat{G}}$\cite{DBLP:books/daglib/Hopcroft}. Then, $\hat{w}\in  A_{\#}^{\otimes k}\cap A_{\hat{G}}$ iff $\hat{w}$ is a tuple of $k$ words of $A_{\#}$ and is accepted by $\hat{G}$. 
Such $\hat{w}$ corresponds to a set $S\in\hlang(G)$, 
as we show in the proof of Theorem~\ref{thm:ExistsRankedEmptiness}. If no such $\hat{w}$ exists, then there is no set $S\in\hlang(G)$ such that $S\subseteq \reglang{A}$.  

\textbf{Complexity.}
The conversion of $\hat{G}$ to a push-down automaton is linear in the size of $\hat{G}$; 
the construction of $A_{\#}$ is liner in the size of $A$, and the construction of $A_{\#}^{\otimes k}$ is polynomial in $|A|$; 
the intersection of $\hat{G}$ and $A_{\#}^{\otimes k}$ is polynomial in $|\hat{G}|$ and $|A_{\#}^{\otimes k}|$; checking push-down automata for emptiness is exponential (and is traditionally done by a conversion back to CFG) \cite{DBLP:books/daglib/Hopcroft}. 
}
\end{proof}

\begin{theorem}\label{thm:finitemembership}
The membership problem for a finite language in a CFHG is in \comp{EXPTIME}. 
\end{theorem}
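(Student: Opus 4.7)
The plan is to evaluate membership by recursing on the quantifier prefix $\alpha=Q_1x_1\cdots Q_kx_k$ and to reduce the base case to an NFA--CFG intersection emptiness check. Fix a finite $\lang=\{w_1,\ldots,w_n\}$ with maximum word length $L$, and let $\hat{G}$ be the underlying CFG of $G$. I would define a recursive procedure $\mathrm{Check}(u,\beta)$, where $u$ is a partial assignment $X\to\lang$ and $\beta$ is a suffix of $\alpha$: for $\beta=\exists x.\beta'$ it returns true iff some $w\in\lang$ yields $\mathrm{Check}(u[x\mapsto w],\beta')=\mathrm{true}$; for $\beta=\forall x.\beta'$ it returns true iff every $w\in\lang$ yields $\mathrm{Check}(u[x\mapsto w],\beta')=\mathrm{true}$; and the base case $\beta=\varepsilon$ is handled as below. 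The answer to the membership problem is $\mathrm{Check}(\emptyset,\alpha)$.

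For the base case I need to decide, given a full assignment $u:X\to\lang$, whether there exist $\#$-paddings $w^i_{\#}\in u(x_i)\uparrow_{\#}$ of a common length such that the resulting word assignment over $\hat\Sigma$ belongs to $\lang(\hat G)$. To this end I would build an NFA $B_u$ over $\hat\Sigma$ whose states are tuples $(p_1,\ldots,p_k)$ with $p_i\in\{0,\ldots,|u(x_i)|\}$, initial state $(0,\ldots,0)$, and a unique accepting state $(|u(x_1)|,\ldots,|u(x_k)|)$. On letter $\{\sigma_{1x_1},\ldots,\sigma_{kx_k}\}$, the coordinate $p_i$ advances to $p_i+1$ when $\sigma_i=u(x_i)[p_i+1]$ and stays put when $\sigma_i=\#$, with no transition otherwise. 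By construction, $\lang(B_u)$ is exactly the set of equal-length $\#$-padded combinations of $u(x_1),\ldots,u(x_k)$; equal length is automatically enforced by the joint run, while individual padding positions may differ per coordinate. I would then intersect $B_u$ with $\hat{G}$ via the standard product construction for a CFG with an NFA, obtaining a CFG of size polynomial in $|\hat{G}|\cdot|B_u|$, and test its language for emptiness, which, after conversion to Chomsky normal form, is polynomial.

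For the complexity, the recursion tree produced by $\mathrm{Check}$ has at most $n^k$ leaves, one per total assignment $u:X\to\lang$. For each such leaf the automaton $B_u$ has size $\prod_{i=1}^{k}(|u(x_i)|+1)\le(L+1)^k$, and the subsequent intersection and emptiness check cost polynomial time in $|\hat G|$ and $|B_u|$. The total running time is therefore $O\bigl(n^k\cdot(L+1)^k\cdot\mathrm{poly}(|\hat{G}|)\bigr)$, which is singly exponential in the input size, placing the problem in \comp{EXPTIME}.

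The main obstacle is precisely the base case: an admissible $\#$-padding can occur at any position within a word and may have any length $\ge\max_i|u(x_i)|$, and the paddings for different coordinates interact through the shared clock imposed by the product alphabet $\hat\Sigma$. Encoding the infinite set of admissible joint paddings as the finite NFA $B_u$ is what makes this interaction manageable and reduces the test to the well-known decidable problem of CFG--NFA intersection emptiness, which is the key step that drives the whole procedure.
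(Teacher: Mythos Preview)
Your proof is correct and follows essentially the same approach as the paper: the paper also enumerates assignments by traversing the decision tree dictated by the quantifier prefix and, for each full assignment $v=[x_1\mapsto w_1]\cdots[x_k\mapsto w_k]$, builds the product $\bigotimes_{i=1}^{k} A_{w_i}$ of single-word automata with $\#$ self-loops (your $B_u$ is precisely this product, described explicitly state by state), intersects it with $\hat G$, and tests for nonemptiness. The complexity analysis and resulting \comp{EXPTIME} bound coincide.
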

\begin{proof} 
Let $\lang$ be a finite language and let $G$ be a CFHG with variables $\{x_1,\ldots x_k\}$. Since $\lang$ is finite, we can construct every assignment of words in $\lang$ to the variables in $G$, and check if it is accepted by $\hat{G}$.
Similarly to the proof of Theorem~\ref{thm:regmodelcheck}, to do so, we use an NFA ${A}_w$ whose language is the set $w\uparrow_\#$, for every $w\in \lang$.
For an assignment $v=[x_1\mapsto w_1]\dots[x_k\mapsto w_k]$, we construct
$
\bigotimes_{i=1}^k{A}_{w_i}$, 
and check the nonemptiness of its intersection with $\hat G$. As in Theorem~\ref{thm:regmodelcheck}, this procedure is exponential in the length of the words in $\lang$ and in $|G|$. 
Since we can finitely enumerate all assignments, we can check whether the quantification condition $\alpha$ of $G$ is satisfied. 
Enumerating all assignments amounts to traversing the decision tree dictated by $\alpha$, which is exponential in $|\alpha|$. 
Therefore, the entire procedure can be done in exponential time in $|G|$ and $\lang$. 
\stam{
In particular, if $G$ is an $\exists^*$ CFHG, we only need to check the intersection of $\otimes_{w\in S}A_w$ with $A_{\hat{G}}$ for emptiness, as we do in the proof of Theorem~\ref{thm:regmodelcheck}. For more complex quantification conditions, we need to consider all possible word assignments, and check their intersection with $A_{\hat{G}}$. We do that as follows. 
Let $\alpha = \quant x_1 \cdots \quant x_k$, and consider the tree $\mathcal{T}$ of depth $n$ with nodes
$N = N_E \cup N_A$. 
Nodes in $N_A$ correspond to the universal quantifiers, and nodes in $N_E$ to existential quantifiers. 
Every $n\in N$ has $|S|$ successors, except for nodes at the $k$-th level of the tree. 
It holds that $|N| = \frac{|S|^k -1}{|S|-1} $, and the number of different paths in $\mathcal{T}$ is as its number of leafs, which is $|S|^{k-1}$. Every such path corresponds to a possible word assignment. For $n\in N_A$, we need to consider word assignments with all of its successors, where for $n\in N_E$, it is enough to find one word word assignment. 
The intersection of $A_{\hat{G}}$ and some $\mathcal{A}^{1..k}$ is not empty iff $\hat{G}$ has a run on the corresponding word assignment. However, in order to check if $S\in\hlang{(G)}$, we might need to traverse $\mathcal{T}$ and check for all possible word assignments. The size of $A_{\hat{G}}\cap \mathcal{A}^{1..k}$ is polynomial in the size of $\hat{G}$ and in the size of the longest word $w_i$, however checking it for emptiness is exponential. 
This process is exponential in $|\hat{G}|$ and $k$, that is, exponential in $|G|$, and in the size of the longest word in~$|S|$. 
}
\end{proof}

\begin{theorem} \label{thm:undecforall}
The emptiness problem for 
$\forall^*$-CFHG and $\exists\forall$-CFHG is undecidable.
\end{theorem}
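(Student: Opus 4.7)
The plan is to reduce from the classical undecidable problem of deciding whether $\lang(G_1) \cap \lang(G_2) \neq \emptyset$ for two given context-free grammars $G_1, G_2$ over an alphabet $\Sigma$, a standard consequence of PCP. Given such $G_1, G_2$, I would construct a single underlying CFG $\hat{G}$ over $\hat{\Sigma} = (\Sigma \cup \{\#\})^{\{x, y\}}$ and use it inside a CFHG whose hyperlanguage is nonempty if and only if $\lang(G_1) \cap \lang(G_2) \neq \emptyset$. The same $\hat{G}$ will serve both for the quantification condition $\forall x \forall y$ (giving the undecidability for $\forall^*$-CFHG, already at two $\forall$-quantifiers) and for $\exists x \forall y$ (giving the undecidability for $\exists\forall$-CFHG).

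First, I would design $\hat{G}$ to derive exactly the words of the form $\{\sigma_x, \#_y\}^{+} \cdot \{\#_x, \tau_y\}^{+}$ whose $x$-projection (after erasing $\#$'s) lies in $\lang(G_1)$ and whose $y$-projection lies in $\lang(G_2)$. This is done by taking a fresh start variable $V_0$ with rule $V_0 \rightarrow V_{1,0} V_{2,0}$, copying the rules of $G_1$ while replacing every terminal $\sigma$ by $\{\sigma_x, \#_y\}$, and copying the rules of $G_2$ while replacing every $\tau$ by $\{\#_x, \tau_y\}$. The key observation is that for any pair of terminal words $(w_1, w_2) \in \Sigma^{*} \times \Sigma^{*}$, the only paddings $w_{1,\#} \in w_1\uparrow_{\#}$ and $w_{2,\#} \in w_2\uparrow_{\#}$ that yield a joint word of this rigid two-block shape are $w_{1,\#} = w_1 \#^{|w_2|}$ and $w_{2,\#} = \#^{|w_1|} w_2$, and such a padding is derived by $\hat{G}$ iff $w_1 \in \lang(G_1)$ and $w_2 \in \lang(G_2)$.

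From this, both undecidability claims follow quickly. For the $\forall x \forall y$-CFHG $G$ built from $\hat{G}$, a language $\lang$ lies in $\hlang(G)$ iff every ordered pair $(w_1, w_2) \in \lang \times \lang$ admits such a padding; specializing to $w_1 = w_2 = w$ forces $\lang \subseteq \lang(G_1) \cap \lang(G_2)$, and conversely any nonempty subset of the intersection satisfies the condition, so $\hlang(G) \neq \emptyset$ iff $\lang(G_1) \cap \lang(G_2) \neq \emptyset$. For the $\exists x \forall y$-CFHG built from the same $\hat{G}$, a singleton $\{w\}$ lies in $\hlang(G)$ iff the diagonal pair $(w, w)$ can be padded into $\lang(\hat{G})$, iff $w \in \lang(G_1) \cap \lang(G_2)$; and conversely, the existential witness $w$ of any language in $\hlang(G)$ produces, via the case $w' = w$ of the $\forall$-clause, an element of $\lang(G_1) \cap \lang(G_2)$. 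Thus the same equivalence holds, and undecidability transfers to both settings.

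The main subtlety is controlling the extra flexibility of the general CFHG semantics: unlike in NFH, the $\#$-symbol in a padding $w_{\#} \in w\uparrow_{\#}$ may appear at any position of the word, not only as a suffix. The rigid two-block structure hard-wired into $\hat{G}$ is what tames this flexibility: in every word of $\lang(\hat{G})$, each non-$\#$ letter on the $x$-track must precede every $\#_x$ on that track (and symmetrically for $y$), which both pins down the admissible paddings to the unique pair $(w_1\#^{|w_2|},\#^{|w_1|}w_2)$ and forces the $x$- and $y$-projections to be full words of $\lang(G_1)$ and $\lang(G_2)$ respectively. Once this invariant is verified, the reduction goes through verbatim in both quantifier settings.
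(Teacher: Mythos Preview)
Your reduction is correct and the key diagonal observation (instantiating the $\forall$-variable with the same word as the other variable) is exactly the mechanism the paper uses as well. One small overstatement: in the $\forall x\forall y$ case you write that ``any nonempty subset of the intersection satisfies the condition''; because the padding $w_{1,\#}$ chosen for the outer $\forall$ is fixed before the inner $\forall$ ranges over $w_2$, your two-block shape actually forces all words of an accepted language to have the same length, so only equal-length subsets of the intersection are accepted. This does not harm the reduction (singletons already witness nonemptiness), but the claim as phrased is too strong.

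The route, however, is genuinely different from the paper's. The paper reduces directly from PCP: for tiles $[a_i,b_i]$ it pads the shorter of $a_i,b_i$ with $\#$ to equal length and lets $\hat G$ derive arbitrary concatenations $\{A_{i_1}{}_{x_1},B_{i_1}{}_{x_2}\}\cdots\{A_{i_m}{}_{x_1},B_{i_m}{}_{x_2}\}$; the diagonal assignment then encodes a PCP solution. Your reduction instead goes through CFG intersection emptiness and uses the asynchronous $\#$ as a block separator rather than as inter-tile padding. Both exploit the same diagonal trick, but the paper's construction buys something extra: its underlying grammar is right-linear (in fact regular), so the undecidability immediately transfers to \emph{asynchronous NFH}, a corollary the paper explicitly draws. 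Your $\hat G$ inherits the full context-free complexity of $G_1,G_2$, so that sharper consequence does not fall out of your argument.
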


\begin{proof}
We show reductions from the Post correspondence problem (PCP).
A PCP instance is a set of pairs of the form $[a_1, b_1], \ldots ,[a_n, b_n]$ where $a_i, b_i\in \{a,b\}^*$. The problem is then to decide whether there exists a sequence of indices $i_1\cdots i_m, ~i_j\in[1,n]$, such that $a_{i_1}a_{i_2}\cdots a_{i_m} = b_{i_1}b_{i_2}\cdots b_{i_m}$. For example, consider the instance $\{[a, baa]_1, [ab, aa]_2, [bba, bb]_3 \}$. Then, a solution to the PCP is the sequence $3,2,3,1$ since $a_3a_2a_3a_1 = bba\cdot ab\cdot bba\cdot a$ and $b_3b_2b_3b_1 = bb\cdot aa\cdot bb\cdot baa$. 

Let $T = \{ [a_1, b_1], \ldots ,[a_n, b_n]\}$ be a PCP instance.  
Let $G = \tuple{\{a,b\},\{ x_1, x_2\}, \{V_0\}, V_0, P, \forall x_1 \forall x_2}$  be a $\forall^*$-CFHG defined as follows. 
For every pair $[a_i, b_i]\in T$ we define the words  $A_i, B_i\in (\Sigma\cup\{\#\})^*$
obtained from $a_i,b_i$ by padding the shorter of $a_i,b_i$ with $\#$-symbols so that $A_i,B_i$ are of equal length. 
We define $P$ as follows. 
\begin{align*}
  P:=  V_0 \rightarrow \tuplevar{{A_1}_{x_1}, {B_1}_{x_2}}V_0 ~|~ \cdots ~|~ \tuplevar{{A_n}_{x_1}, {B_n}_{x_2}}V_0 ~|~ \tuplevar{{A_1}_{x_1}, {B_1}_{x_2}} ~|~ \cdots ~|~ \tuplevar{{A_n}_{x_1}, {B_n}_{x_2}}
\end{align*}
For a language $\lang \in \hlang(G)$, it must hold that
$\wass{w}{[x_1\mapsto u][x_2\mapsto v]}\in\lang(\hat G)$ for every $u,v\in \lang$, due to the $\forall\forall$-condition. Let $u\in\lang$. Then, in particular, $\wass{w}{[x_1\mapsto u][x_2\mapsto u]}\in\reglang{\hat G}$. Notice that in this case, $u$ is a solution to $T$. In the other direction, a solution to $T$ induces a word $u=a_{i_1}a_{i_2}\cdots a_{i_m}$ such that $\{u\}\in\hlang(G)$.  
The same reduction holds also for the case of $\exists\forall$, since according to the $\forall$ requirement, one of the word assignments must assign the same word to both variables. 
\stam{
it has to hold that every word $w\in \lang$ can be assigned simultaneously to both $x_1$ and $x_2$, due to the $\forall x_1 \forall x_2$ condition. Such a simultaneous assignment $\wass{w}{[x_1 \mapsto w] [x_2\mapsto w]}$
is a solution of the PCP instance
since (1) the same word is generated both by the $A$ variables and by the $B$ variables; and (2) we only allow to derive tuples of the form $\tuple{A_i, B_i}$, and thus the sequence of applying the derivation rules corresponds to the set of indices that is the solution to the PCP instance. 
Therefore, every $\lang \in\hlang(G)$ forms a set of solutions to the PCP instance.

\textbf{$\forall\exists$-CFHG}. 
Let $G = \tuple{\Sigma,\{ x_1, x_2\}, \{V_0\}, V_0, P, \exists x_1 \forall x_2}$ be a CFHG where $P$ is the same as above. In a similar manner, if $L\in\hlang(G)$ then there must exist $w\in L$ such that $\wass{w}{[x_1 \mapsto w][x_2\mapsto w]}$ is a solution to the PCP instance, and vice versa. 
}
\end{proof}

Note that the proof of Theorem~\ref{thm:undecforall} 
compares between two words in order to simulate PCP. For a single $\forall$-quantifier, the nonemptiness problem is equivalent to that of CFG, and is therefore in \comp{P}.

The underlying CFG we use in the proof of Theorem~\ref{thm:undecforall} is linear, and so the result follows also to asynchronous NFH, that allow $\#$-symbols arbitrarily. This is in line with the results in~\cite{bcbfs21}, which shows that the model-checking problem for asynchronous hyperLTL is undecidable. 

\subsection{Synchronous Hypergrammars}\label{sec:sync}

As we show in Section~\ref{sec:CFGH}, the  asynchronicity of general CFHG leads to undecidability of most decision problems for them, already for simple quantification conditions. 
We now introduce {\em ranked CFHG}, a fragment of CFHG that ensures synchronous behavior. We then prove that ranked CFHG capture exactly the set of {\em synchronous hyperlanguages}. Intuitively, synchronous hyperlanguages are derived from grammars in which $\#$ only appears at the end of the word, similarly to NFH (we say that such a word assignment is {\em synchronous}). 
Since CFHG may use non-linear rules, in order to characterize the grammar rules that derive synchronous hyperlanguages, we need to reason about structural properties of the grammar. 
To this end, we define a {\em rank} for each variable $v$, which, intuitively, corresponds to word variables for which $v$ derives $\#$-symbols. 

\begin{remark}
Before we turn to the definition of ranks of variables and ranked grammars we note on the difference between a definition of grammars which their hyperlanguages are synchronous, as we do in the rest of this section; and the problem of, given some hypergrammar $G$, 
finding the hyperlanguage $\hlang{(G_s)} \subseteq \hlang{(G)}$ that corresponds to the synchronous sub-hyperlanguage of $G$. 
Assume that $G$ is over $\Sigma$ and has $k$ quantifiers. Then, the latter can be done by constructing an NFA $A_s$ over $(\Sigma \cup \{\#\})^k$ that accepts all words in which $\#$ appears only at the end of words. The intersection of $A_s$ and $G$ results in the grammar $G_s$, whose language is a subset of that of $G$. We approach a different problem, namely defining a fragment of grammars that accept exactly the class of synchronous hyperlanguages. 
\end{remark}

In order to define the ranks of variables, we use the \emph{rule graph} $\mathcal{G}$, defined as follows. The set of vertices of $\mathcal{G}$ is $V \cup W$, where
$W = \{ \gamma \in   (\hat\Sigma\cup V)^* ~|~ \exists v\in V. v\rightarrow\gamma\in P  \}$
is the set of sequences appearing on the right side of one of the grammar rules. 
The set of edges $E$ of $\mathcal{G}$ is $E = E_L \cup E_R$ where
\begin{align*}
    E_L = \{ \tuple{v, w} ~|~ v\rightarrow w \in P \}\cup \{ \tuple{w, v}\ ~|~ w = v\gamma \} \\
E_R = \{ \tuple{v, w} ~|~ v\rightarrow w \in P \}\cup \{ \tuple{w, v}\ ~|~ w = \gamma v\} 
\end{align*}
We partition $\mathcal{G}$ into maximal strongly connected components (MSCCs) with respect to each type of edges ($E_L$ and $E_R$), resulting in two directed a-cyclic graphs $\mathcal{G}_L$ and $\mathcal{G}_R$.
The vertices of $\mathcal{G}_d$ for $d\in\{L,R\}$ are the MSCCs according to $E_d$, 
and there is an edge $C_1^d\rightarrow C_2^d$ iff there exist $u,u'\in (V\cup W)$ such that $u\in C_1^d, u'\in C_2^d$ and $\langle u,u' \rangle\in E_d$. 
Note that every terminal word is a singleton MSCC in both graphs.

\begin{example}\label{ex:rank}
Figure~\ref{fig:MSCC} presents $\mathcal{G}_R$ for $G$
of the proof of Theorem~\ref{thm:undecforall}, and the PCP instance $\{[a, baa]_1, [ab, aa]_2,$ $[bba, bb]_3 \}$,
with the concrete derivation rules: 
\begin{align*}
  V_0 \rightarrow \tuplevar{{a\#\#}_{x_1}, {baa}_{x_2}}V_0 ~|~ \tuplevar{{ab}_{x_1}, {aa}_{x_2}}V_0 ~|~ 
  \tuplevar{{baa}_{x_1}, {bb\#}_{x_2}}V_0 ~|~ \\ \tuplevar{{a\#\#}_{x_1}, {baa}_{x_2}}~|~ \tuplevar{{ab}_{x_1}, {aa}_{x_2}} ~|~ \tuplevar{{baa}_{x_1}, {bb\#}_{x_2}}
\end{align*}
\end{example}


\begin{figure}[t]
\centering
\scalebox{.52}{
        \includegraphics[scale=0.9]{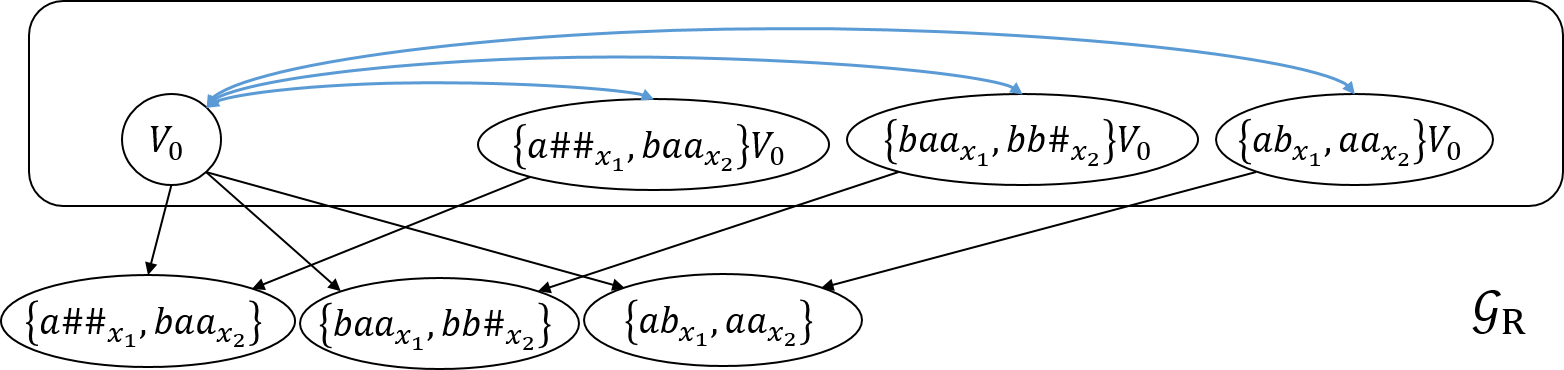}}
   \caption{The MSSC graph $\mathcal{G}_R$ for the grammar and PCP instance of Example~\ref{ex:rank} and the Proof of Theorem~\ref{thm:undecforall}. Blue edges are bidirectional, and the rectangle
   represents an MSCC. }
    \label{fig:MSCC}
\end{figure}

We now define the {\em left ranks} and {\em right ranks} of synchronous words, variables and sequences. 

\begin{enumerate}
    \item \textbf{Ranks of terminal synchronous words}. The rank of a letter $\hat\sigma =
    \{\sigma_{1_{x_1}}, \ldots \sigma_{n_{x_n}}\}\in \hat\Sigma$
    is $t(\hat\sigma) = \{x_i~|~\sigma_{i_{x_i}}= \#\}$. 
    The left rank of $\hat{w}$ is $\Le{\hat w} = t(\hat\sigma_1)$, and its right rank is $\Ri{\hat w} = t(\hat\sigma_n)$, where $\hat\sigma_1$ and $\hat\sigma_n$ are the first and last letters of $\hat w$, respectively.
\item \textbf{Inductive definition for variables and sequences}. 
Let $d\in \{L,R\}$, and let $C_1^d\rightarrow C_2^d$ 
in $\mathcal{G}_d$ such that $\mathbf{d}(u')$ 
is defined for every $u'\in C_2^d$ and $\mathbf{d}\in\{\mathbf{L}, \mathbf{R} \}$. 
Let $\gamma\in(\hat{\Sigma}\cup V)^*$, $\sigma\in\hat{\Sigma}$, and $v\in V$. 
\begin{itemize}
       \item For $u\in C_1^d\in \mathcal{G}_d$ such that $u = \sigma \gamma$
      we define $\Le{u} = l(u) = \Le{\sigma}$.
     \item For $u\in C_1^d\in \mathcal{G}_d$ such that $u = \gamma \sigma $
     we define
    $\Ri{u} = r(u) = \Ri{\sigma}$.
        \item For $u\in C_1^L \in \mathcal{G}_L$ such that $u = v\gamma$
    we define $l(u) =\bigcup_{C_1^L\rightarrow C_2^L} \bigcap_{u'\in C_2^L} \Le{u'}$.
    \item For $u\in C_1^R\in\mathcal{G}_R$ such that $u = \gamma v$
    we define
$r(u) = \bigcup_{C_1^R\rightarrow C_2^R} \bigcup_{u'\in C_2 ^R} \Ri{u'}$. 
\end{itemize}
Now, for each $u = v\gamma \in C_1^L$  
we define $\Le{u} = \bigcap_{u'\in C_1^L} l(u')$, 
and for each 
 $u = \gamma v \in C_1^R$ we define  
$\Ri{u} = \bigcup_{u'\in C_1^R} r(u')$. 
\end{enumerate}

Note that this process is guaranteed to terminate, since we traverse both graphs in reverse topological order. 
Therefore, at the end of the process, $\Le{u}$ and
$\Ri{u}$ are defined for every $u \in V\cup W$.  


We define \emph{ranked CFGs} to be CFGs in which for every rule $v\rightarrow \gamma_1 \cdots \gamma_n$ for $\gamma_i\in (\hat\Sigma \cup V)$, it holds that $\Ri{\gamma_i}\subseteq \Le{\gamma_{i+1}}$. Intuitively, this means that $\gamma_i$ may not produce $\#$ to its right, if $\gamma_{i+1}$ can produce $\sigma\neq \#$ to its left, leading to unsynchronous derivation.  
A CFHG $G$ is \emph{ranked} if $\hat G$ is ranked.

\begin{example}
Consider $G$ of Example~\ref{ex:rank} and  $\mathcal{G}_R$ of Figure~\ref{fig:MSCC}. The graph $\mathcal{G}_L$ is similar to $\mathcal{G}_R$, with no edges back to $V_0$, (and thus without the rectangle MSCC). We compute some of the ranks for $G$:
\begin{gather*}
    \Le{\{a\#\#_{x_1}, baa_{x_2} \}} = \Le{\{baa_{x_1}, bb\#_{x_2} \}} = \Le{\{ab_{x_1}, aa_{x_2} \}} = \emptyset ~~~~~  \Le{V_0} = \emptyset \\
    \Ri{\{a\#\#_{x_1}, baa_{x_2} \}} = \{ x_1 \}~~~~\Ri{\{baa_{x_1}, bb\#_{x_2} \}} = \{ x_2 \}~~~~ \Ri{\{ab_{x_1}, aa_{x_2} \}} = \emptyset  ~~~~ \Ri{V_0} = \{x_1,x_2 \}
\end{gather*}
 $G$ is not ranked, since for the rule $V_0 \rightarrow \{a\#\#_{x_1}, baa_{x_2} \} V_0$,
it holds that $\Ri{\{a\#\#_{x_1}, baa_{x_2} \} }\not\subseteq \Le{V_0}$.
\end{example}

\begin{example}
The following CFHG $G_r = \langle \{a,b \}, \{x_1, x_2 \}, \{V_0, V_1\}, V_0, P, \forall x_1\exists x_2\rangle$ is ranked, where $P$ is: 
\begin{align*}
    & P:= 
    V_0 \rightarrow V_1 V_2
    \\ & V_1 \rightarrow \{a_{x_1}, a_{x_2} \} V_1 \{ b_{x_1}, b_{x_2}\} ~|~ \{ab_{x_1}, ab_{x_2} \} 
    \\ 
    & V_2 \rightarrow  V_2 \{ \#_{x_1}, b_{x_2}\}~ | ~\{ \#_{x_1}, b_{x_2}\} 
\end{align*}
$G_r$ accepts all languages in which for every word of the type $a^n b^n$ there exits a word with more $b$'s, that is, there exists $a^n b^m$ for $m>n$.

The ranks of $G_r$, as shown below, demonstrate that $G_r$ is indeed ranked.  
\begin{gather*}
     \Le{\{ \#_{x_1}, b_{x_2}\}} =  \Ri{\{ \#_{x_1}, b_{x_2}\}} = \{ x_1\} ~~~~~ \Le{\{ab_{x_1}, ab_{x_2} \}  } =  \Ri{\{ab_{x_1}, ab_{x_2} \} } =  \emptyset
     \\ \Le{\{a_{x_1}, a_{x_2} \} V_1 \{ b_{x_1}, b_{x_2}\}} = \Ri{\{a_{x_1}, a_{x_2} \} V_1 \{ b_{x_1}, b_{x_2}\}} = \Ri{V_1} = \Le{V_1} = \emptyset 
     \\ 
     \Ri{V_2 \{\#_{x_1}, b_{x_2} \}} = \Le{V_2 \{\#_{x_1}, b_{x_2} \}} = \Ri{V_2} = \Le{V_2} = \{ x_1 \}  
      \\
     \Ri{V_0} = \emptyset ~~~ \Le{V_0} = \{x_1\} 
\end{gather*}
\end{example}

\begin{definition}
$\hlang$ is a {\em synchronous context-free hyperlanguage} if 
there exists a CFHG $G$ for $\hlang$ in which $\hat G$ only derives synchronous word assignments.
\end{definition}

\begin{theorem}\label{prop:rankedgrammars}
A hyperlanguage $\hlang$ is derived by a ranked CFHG iff $\hlang$ is synchronous context-free.  
\end{theorem}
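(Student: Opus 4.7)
The plan is to prove the two directions separately.

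\textbf{Forward direction.} Assuming $G$ is ranked, I would proceed by structural induction on derivations of $\hat G$, maintaining the invariant that for every derivation $u\Rightarrow^* w$ with $w\in\hat\Sigma^*$, the word $w$ is synchronous, the set of $\#$-tracks at the leftmost letter of $w$ contains $\Le{u}$, and the set of $\#$-tracks at the rightmost letter of $w$ is contained in $\Ri{u}$. The base case is immediate from the definition of the ranks on terminals. For the inductive step, fix a rule $v\to\gamma_1\cdots\gamma_n$ together with sub-derivations $\gamma_i\Rightarrow^* w_i$. By the induction hypothesis each $w_i$ is synchronous, the $\#$-tracks at the right of $w_i$ are contained in $\Ri{\gamma_i}$, and the $\#$-tracks at the left of $w_{i+1}$ contain $\Le{\gamma_{i+1}}$; combined with the rank condition $\Ri{\gamma_i}\subseteq\Le{\gamma_{i+1}}$, this guarantees that any track carrying $\#$ at the end of $w_i$ carries $\#$ at the start of $w_{i+1}$, and then throughout $w_{i+1}$ by its own synchronicity, making the concatenation $w_1\cdots w_n$ synchronous. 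The bounds on the extremal $\#$-tracks of the derived word propagate from $\Le{\gamma_1}$ and $\Ri{\gamma_n}$ to $\Le{v}$ and $\Ri{v}$ via the aggregation formulas that define the ranks of $v$ from its productions.

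\textbf{Backward direction.} Suppose $\hlang$ is derived by a CFHG $G$ whose $\hat G$ produces only synchronous word assignments. I would construct a ranked CFHG $G'$ with $\hlang(G')=\hlang$ by refining every variable of $\hat G$ with a rank profile. For each $v\in V$ and each pair $(L,R)$ with $L\subseteq R\subseteq X$, introduce a fresh variable $v_{L,R}$ intended to derive exactly those subwords derivable from $v$ whose leftmost letter has $\#$ on exactly $L$ and whose rightmost letter has $\#$ on exactly $R$. For each original rule $v\to\gamma_1\cdots\gamma_n$ of $\hat G$ and every tuple $(L_i,R_i)_{i=1}^n$ satisfying $L_1=L$, $R_n=R$, $R_i\subseteq L_{i+1}$ for all $i<n$, and $L_i=R_i=t(\sigma)$ whenever $\gamma_i=\sigma\in\hat\Sigma$, add the refined rule $v_{L,R}\to(\gamma_1)_{L_1,R_1}\cdots(\gamma_n)_{L_n,R_n}$ (identifying $(\sigma)_{L_i,R_i}$ with $\sigma$ for terminal symbols). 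A fresh start variable $V_0'$ equipped with rules $V_0'\to(V_0)_{\emptyset,R}$ for every $R\subseteq X$ completes the construction. A synchronous derivation of $\hat G$ annotates uniquely with ranks to yield a derivation of $\hat G'$, and conversely every derivation of $\hat G'$ forgets the annotations to produce a synchronous derivation of $\hat G$, so $\hlang(G')=\hlang(G)=\hlang$.

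\textbf{Main obstacle.} The hardest step will be verifying that the structural rank computation on $G'$ yields exactly $\Le{v_{L,R}}=L$ and $\Ri{v_{L,R}}=R$, so that the explicitly imposed constraint $R_i\subseteq L_{i+1}$ in each rule of $G'$ coincides with the rank condition needed for $G'$ to be ranked. I would establish this by induction over the MSCC DAGs of $G'$: every production of $v_{L,R}$ begins with a symbol whose $\Le$-rank is $L$ and ends with one whose $\Ri$-rank is $R$, so the $\bigcap/\bigcup$ aggregation in the rank definition propagates the annotated values faithfully, and the constraint built into the rules then matches $\Ri{(\gamma_i)_{L_i,R_i}}\subseteq\Le{(\gamma_{i+1})_{L_{i+1},R_{i+1}}}$ directly.
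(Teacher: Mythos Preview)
Your forward direction is essentially the paper's argument (the paper packages the induction into Claim~\ref{claim:ranked}), so that part matches.

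Your backward direction, however, takes a genuinely different route. You build a \emph{new} grammar $G'$ by annotating each variable with a rank profile $(L,R)$ and then argue that $G'$ is ranked. The paper instead proves something stronger and simpler: it shows that the given grammar $G$---the one whose underlying CFG already derives only synchronous word assignments---is \emph{itself} ranked. The argument is by contradiction: if some rule $v\to\gamma_1\cdots\gamma_n$ violated $\Ri{\gamma_i}\subseteq\Le{\gamma_{i+1}}$, then Claim~\ref{claim:derive} (which says the ranks $\Ri{\cdot}$ and $\Le{\cdot}$ are witnessed by actual terminal derivations) produces concrete words $w_i,w_{i+1}$ with $j\in\Ri{w_i}\setminus\Le{w_{i+1}}$ for some track $j$, yielding an unsynchronous word derived by $\hat G$. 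This avoids your entire annotation construction and the delicate verification you flag as the ``main obstacle'' (checking that the computed ranks on $G'$ coincide with the annotations, which requires tracking how the MSCC decomposition interacts with the refined variables). Your construction is correct in spirit and would work, but the paper's observation---that ``derives only synchronous words'' already \emph{implies} ranked---is both shorter and yields a tighter characterization: ranked CFHG are exactly the CFHG whose underlying CFG is synchronous, not merely a class that captures the same hyperlanguages.
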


In order to prove Theorem~\ref{prop:rankedgrammars}, we use the following claims. 

\begin{claim} \label{claim:ranked}
Let $G = \tuple{\Sigma, X, V, V_0, P, \alpha}$
be a ranked CFHG. Then, for every word $\gamma = \gamma_1 \cdots \gamma_n\in (\hat{\Sigma}\cup V)^*$, if there exists $v\in V$ such that $v \Rightarrow^* \gamma$, then $\Ri{\gamma_i}\subseteq \Le{\gamma_{i+1}}$ for all $i\in[1, n-1]$.
\end{claim}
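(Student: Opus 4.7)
The plan is induction on the length of the derivation $v \Rightarrow^* \gamma$. The base case (length at most one) is immediate: if $\gamma = v$ then there is no adjacent pair to check, and if $v \to \gamma$ is a single rule then the required inclusion $\Ri{\gamma_i} \subseteq \Le{\gamma_{i+1}}$ is exactly the ranked hypothesis applied to this rule.

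For the inductive step, factor the derivation as $v \Rightarrow^* \gamma' \Rightarrow \gamma$, where the last step applies a rule $U \to \beta$ with $\beta = \beta_1 \cdots \beta_\ell$, so that $\gamma' = \alpha_1\,U\,\alpha_2$ and $\gamma = \alpha_1\,\beta\,\alpha_2$. Each adjacent pair in $\gamma$ falls into one of three categories: a pair lying entirely inside $\alpha_1$ or $\alpha_2$, which is inherited from $\gamma'$ and satisfied by the induction hypothesis; an internal pair $(\beta_j, \beta_{j+1})$, which is exactly what the ranked hypothesis guarantees for the rule $U \to \beta$; or one of the two boundary pairs $(a, \beta_1)$ with $a$ the last symbol of $\alpha_1$, and $(\beta_\ell, b)$ with $b$ the first symbol of $\alpha_2$. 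The induction hypothesis applied to $\gamma'$ gives $\Ri{a} \subseteq \Le{U}$ and $\Ri{U} \subseteq \Le{b}$, so it suffices to establish the two monotonicity inequalities $\Le{U} \subseteq \Le{\beta_1}$ and $\Ri{\beta_\ell} \subseteq \Ri{U}$.

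To prove these monotonicity inequalities I would unfold the MSCC-based definitions of $\Le{\cdot}$ and $\Ri{\cdot}$, using the fact that the rule $U \to \beta$ places edges $U \to \beta$ and $\beta \to \beta_1$ in $\mathcal{G}_L$, and edges $U \to \beta$ and $\beta \to \beta_\ell$ in $\mathcal{G}_R$. If $U$ and $\beta_1$ lie in the same MSCC of $\mathcal{G}_L$, then $\beta$ lies in that MSCC as well and the aggregates defining $\Le{U}$ and $\Le{\beta_1}$ coincide; otherwise $\beta_1$ sits in a strictly lower MSCC in the reverse topological order, and its contribution already appears inside the successor-MSCC aggregate used to define $l(U)$ and hence $\Le{U}$. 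The terminal-starting case $\beta_1 = \sigma \in \hat\Sigma$ is handled separately via the base clause $\Le{\sigma} = t(\sigma)$, and the argument in $\mathcal{G}_R$ proceeds dually to yield $\Ri{\beta_\ell} \subseteq \Ri{U}$. The main obstacle is precisely this case analysis: tracking the nested intersections and unions correctly across the in-MSCC and cross-MSCC regimes, and taking care that the monotonicity direction comes out right, with $\Le$ growing as one specializes a variable to a particular right-hand side and $\Ri$ shrinking in the dual way.
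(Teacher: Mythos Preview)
The paper states this claim without proof (it is listed together with Claim~\ref{claim:derive} as an auxiliary lemma toward Theorem~\ref{prop:rankedgrammars}), so there is no argument to compare against directly. Your approach---induction on the length of the derivation, isolating the two boundary pairs as the only non-trivial case, and reducing those to the monotonicity inequalities $\Le{U}\subseteq\Le{\beta_1}$ and $\Ri{\beta_\ell}\subseteq\Ri{U}$---is the natural one and is correct.

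A small refinement: when $U$ and $\beta_1$ lie in different MSCCs of $\mathcal{G}_L$, $\beta_1$ need not be a \emph{direct} successor of $C_U$; the path in $\mathcal{G}_L$ is $U\to\beta\to\beta_1$, so $C_{\beta_1}$ may be a successor of $C_\beta$ rather than of $C_U$. The clean way to handle this is to pass through $\Le{\beta}$ as an intermediate step (showing $\Le{U}\subseteq\Le{\beta}$ and $\Le{\beta}\subseteq\Le{\beta_1}$ separately), which your case analysis will produce anyway once you split on whether $\beta\in C_U$ and whether $\beta_1\in C_\beta$. The dual chain $\Ri{\beta_\ell}\subseteq\Ri{\beta}\subseteq\Ri{U}$ is even more direct because $\Ri{\cdot}$ is built entirely from unions.
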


\begin{claim}\label{claim:derive}
Let $G = \tuple{\Sigma, X, V, V_0, P, \alpha}$
be a (possibly not ranked) CFHG with $|X| = k$, and let $v\in V$.
\begin{enumerate}
    \item For every $j\in [1,k]\setminus \Le{v}$ there exists $w\in\hat{\Sigma}^*$ such that $v\Rightarrow^* w$ and $j\notin \Le{w}$. 
    \item For every $j\in\Ri{v}$ there exists $w\in\hat{\Sigma}^*$ such that $v\Rightarrow^* w$ and $j\in \Ri{w}$. 
\end{enumerate}
\end{claim}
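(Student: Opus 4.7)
The plan is to prove both parts by strong induction on the position of the MSCC of $v$ in the reverse topological order of $\mathcal{G}_L$ (for Part~1) and of $\mathcal{G}_R$ (for Part~2). Before starting the induction, I would isolate two structural remarks that make the inductive step tractable. First, every sink of $\mathcal{G}_L$ is a singleton $\{\sigma\gamma\}$ whose element is terminal-starting, since any other vertex has an outgoing $E_L$-edge; consequently $C^L_v$ is never a sink when $v \in V$, all its elements begin with a variable, and $C^L_v$ has at least one successor in $\mathcal{G}_L$. Second, inside a single MSCC $C$ the formula for $l$ evaluates to the same set on every (variable-starting) vertex, so $\Le{u} = \Le{u'}$ for all $u, u' \in C$; the definition collapses to $\Le{v} = \bigcup_{C^L_v \to C^L_2} \Le{C^L_2}$, writing $\Le{C^L_2}$ for this common value. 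The analogous statements hold for $\mathcal{G}_R$ and $\Ri{\cdot}$.

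For the induction step of Part~1, I take $j \notin \Le{v}$, so that $j \notin \Le{C^L_2}$ for every successor $C^L_2$ of $C^L_v$. I pick any outgoing edge of $C^L_v$, say from $u_* \in C^L_v$ to $u'_* \in C^L_2$; strong connectivity inside $C^L_v$ yields a path in $E_L$ from $v$ to $u_*$, which reads off as a leftmost derivation $v \Rightarrow^* u_* \cdot \beta$ for some $\beta \in (V \cup \hat\Sigma)^*$. If the exit is a rule-edge into a sink $\{\sigma\gamma\}$, applying the rule places the terminal letter $\sigma$ at the leftmost position, and $j \notin \Le{\sigma}$. If the exit lands in a non-sink $C^L_2$, the leftmost object is a variable $v^\dagger \in C^L_2$ with $j \notin \Le{v^\dagger}$, and the induction hypothesis supplies $w^\dagger$ with $v^\dagger \Rightarrow^* w^\dagger$ and $j \notin \Le{w^\dagger}$. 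In either case, I complete the derivation by reducing all remaining variables to terminal strings (possible by the standing assumption that every variable derives some terminal word), obtaining a terminal word $w$ whose first letter is either $\sigma$ or the first letter of $w^\dagger$, so that $j \notin \Le{w}$.

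Part~2 proceeds symmetrically, with $\mathcal{G}_R$ and $E_R$ in place of $\mathcal{G}_L$ and $E_L$, rightmost derivations in place of leftmost ones, and $\Ri{\cdot}$ in place of $\Le{\cdot}$. The substantive difference is that $\Ri{v} = \bigcup_{C^R_v \to C^R_2} \Ri{C^R_2}$ is a union, so $j \in \Ri{v}$ provides \emph{one} successor $C^R_2$ with $j \in \Ri{C^R_2}$; the derivation must be steered through an exit edge landing specifically in this $C^R_2$, which exists by the definition of successor MSCC.

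The step I expect to be the main obstacle is precisely this asymmetry between the two parts. In Part~1 every exit is $j$-friendly so the only challenge is to exhibit one exit at all, which the non-sink observation handles; in Part~2 the friendly successor is specific, so I must ensure that the derivation is routed through the correct exit edge rather than an arbitrary one. The invariance of $\Le{\cdot}$ and $\Ri{\cdot}$ on each MSCC is the technical lever that makes both parts work: it lets the inductive hypothesis, stated about variables in strictly lower MSCCs, be invoked at the precise point where the derivation crosses out of $C^L_v$ (respectively $C^R_v$).
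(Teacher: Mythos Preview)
The paper states Claim~\ref{claim:derive} without proof, so there is no argument in the paper to compare against; your proposal is therefore the only proof on the table, and it is along the lines the inductive definition of $\Le{\cdot}$ and $\Ri{\cdot}$ clearly invites. The two structural remarks you isolate---that terminal-starting right-hand sides are sinks (hence every MSCC containing a variable consists entirely of variable-starting vertices), and that $\Le{\cdot}$ and $\Ri{\cdot}$ are constant on each non-sink MSCC---are exactly the facts one needs to make the induction go through, and your reduction of the rank formula to $\Le{v}=\bigcup_{C^L_v\to C^L_2}\Le{C^L_2}$ (and the analogous formula for $\Ri{\cdot}$) is correct.

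One point deserves more care than your sketch gives it. When you exit $C^L_v$ (or $C^R_v$) via an edge $(u_*,u'_*)$, the vertex $u'_*$ need not itself be a variable: it may be a right-hand side $u'_*=v^\dagger\gamma'$ (for Part~1) or $u'_*=\gamma'v^\dagger$ (for Part~2), and $v^\dagger$ may lie in $C^L_2$ \emph{or in a strictly later} MSCC $C^L_3$. In Part~1 this is harmless, since $\Le{C^L_3}\subseteq\Le{C^L_2}$ for any successor $C^L_3$ of $C^L_2$, so $j\notin\Le{v^\dagger}$ still holds. In Part~2 the corresponding inclusion goes the wrong way, but the situation is saved by a small case analysis: if $u'_*\in W\setminus V$ then its only outgoing $E_R$-edge is to $v^\dagger$, so $C^R_2=\{u'_*\}$ has $C^R_3$ as its unique successor and $\Ri{C^R_2}=\Ri{C^R_3}$; if $u'_*\in V$ the induction applies to $u'_*$ directly. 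You should make this explicit, since your phrase ``the leftmost object is a variable $v^\dagger\in C^L_2$'' is not literally true in all cases. With this refinement the argument is complete.
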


\begin{proof}[Proof of Theorem~\ref{prop:rankedgrammars}]
Let $\hlang$ be a context-free language that is accepted by a ranked grammar $G$. 
According to Claim~\ref{claim:ranked}, 
for every word $w = w_1 \cdots w_n\in \hat{\Sigma}^*$ such that $V_0 \Rightarrow^* w$, it holds that $\Ri{w_i}\subseteq \Le{w_{i+1)}}$ for $i\in[1, n-1]$. That is, $\#$ is allowed to only appear at the end of words, and so $\hat G$ only derives synchronous word assignments. 


For the other direction, let $\hlang$ be a synchronous context-free hyperlanguage, and let $G$ be a CFHG for $\hlang$ that only derives synchronous word assignments. 
Assume by way of contradiction that $G$ is not ranked. Then, there exists some rule 
$v\rightarrow \gamma_1 \cdots \gamma_n \in P$ 
where $\gamma_i \in (\hat{\Sigma} \cup V)$
such that $\Ri{\gamma_i}\not\subseteq \Le{\gamma_{i+1}}$ for some $i\in[1,n]$. 
Recall that we assume that all rules are reachable and that every variable can derive a terminal word. 
Consider a derivation sequence $V_0 \Rightarrow^* \beta v \beta' \Rightarrow \beta\gamma_1 \cdots \gamma_n \beta'$. 
Then, there exist $w,w_i,w_{i+1}, w'\in \hat{\Sigma}^*$ such that $\gamma_i \Rightarrow^* w_i$, $\gamma_{i+1} \Rightarrow^* w_{i+1}$ and $V_0 \Rightarrow^* w w_i w_{i+1} w'$; and due to claim~\ref{claim:derive}, for some $j\in \Ri{\gamma_i} \setminus \Le{\gamma_{i+1}}$, it holds that $j\in \Ri{w_i}\setminus \Le{w_{i+1}}$. Hence, $w_i$ ends with $\#$ in some location which is followed by a letter in $w_{i+1}$, and so the word assignment $w w_i w_{i+1} w'$ is not synchronous, a contradiction.
\end{proof}

We therefore term ranked grammars {\em syncCFHG}. 
Given a CFHG $G$, deciding whether it is ranked amounts to constructing the graph $\mathcal{G}$ and traversing the topological sorting of its MSCC graph in reverse order in order to compute all ranks, and finally checking that all grammar rules of $G$ comply to the rank rules. All these steps can be computed in polynomial time. We now show that syncCFHG is more decidabile than CFHG.

\stam{
\begin{proposition}
The problem of deciding whether a CFHG $G$ defines a synchronous hyperlanguage is in \comp{P}.
\end{proposition}

\begin{proof}
The construction of the graph $\mathcal{G}$ and the partition into MSCCs is polynomial in the number of variables and in the number of rules in $P$.  Then, computing the ranks of variables is done by (1) computing $\bigcup_{C} \bigcup_{v'\in C } r(v')$ and $\bigcup_{C} \bigcap_{v'\in C} l(v')$ for each component that was already traversed; and (2) traverse each vertex once, and for this vertex set $r(v)$ and $l(v)$. 
Last, we need to check that all rules $v\rightarrow \gamma_1 \cdots \gamma_n$  satisfy the requirement $r(\gamma_i)\subseteq l(\gamma_{i+1})$. This check is polynomial in the number of rules and the size of the longest rule.  
\end{proof}
}

\begin{theorem}\label{thm:forallsync}
The nonemptiness problem for $\forall^*$-syncCFHG and $\exists\forall^*$-syncCFHG is in \comp{P}.
\end{theorem}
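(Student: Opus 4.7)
The plan is to reduce nonemptiness to asking whether $\hat G$ derives a ``diagonal'' word assignment, namely an assignment that gives the same word $w$ to every word variable in $X$. The key claim I would establish is: a $\forall^*$- or $\exists\forall^*$-syncCFHG $G$ with $X=\{x_1,\ldots,x_k\}$ is nonempty iff there exists $w \in \Sigma^*$ such that the word assignment $\wass{w}{u}$ with $u(x_i)=w$ for every $i$ lies in $\lang(\hat G)$.

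For the easy direction of the claim, if such a $w$ exists then the singleton language $\{w\}$ is a witness in $\hlang(G)$: the only way to assign words of $\{w\}$ to variables is to pick $w$ for each, so every $\forall$ requirement and the $\exists$ requirement (in the $\exists\forall^*$ case) is satisfied by the fact that the diagonal assignment is derivable. Since $G$ is ranked and all variables receive the same word, no $\#$-padding is needed and the derivation is synchronous as required by Theorem~\ref{prop:rankedgrammars}. For the other direction, if $\lang \in \hlang(G)$ then in the $\forall^*$ case pick any $w \in \lang$; in the $\exists\forall^*$ case let $w$ be the existential witness for $x_0$. Instantiating all universally quantified variables with this same $w$ (a legal choice since $w\in\lang$) forces the diagonal assignment on $w$ to be derivable by $\hat G$.

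Given the claim, the algorithm is short: let $D$ be the one-state regular language over $\hat\Sigma$ whose words consist only of letters of the form $\{\sigma_{x_1},\ldots,\sigma_{x_k}\}$ for some $\sigma\in\Sigma$ (no $\#$ appears). I would intersect $\hat G$ with a DFA for $D$ using the standard polynomial CFL-regular product construction~\cite{DBLP:books/daglib/Hopcroft}, yielding a CFG whose language consists exactly of the diagonal word assignments derivable by $\hat G$; testing this CFG for emptiness is polynomial. Combined, the entire decision procedure runs in polynomial time in $|G|$.

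The delicate point, and the main obstacle, is the backward direction of the claim. In general CFHG the derivation of an assignment may interleave $\#$-padding at arbitrary internal positions, so a singleton language is not in general a valid normal form for nonemptiness -- indeed this is exactly the slack exploited by the PCP reduction of Theorem~\ref{thm:undecforall}. The ranked discipline (Theorem~\ref{prop:rankedgrammars}) removes precisely this slack: synchronous derivations forbid $\#$ in internal positions, so a diagonal terminal word has no padding, and nonemptiness collapses to a clean CFL-regular intersection test. This is what makes the problem tractable for syncCFHG while leaving it undecidable for general CFHG.
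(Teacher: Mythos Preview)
Your proposal is correct and follows essentially the same approach as the paper: reduce nonemptiness to the existence of a diagonal word assignment (all variables receive the same word $w$), and decide this by restricting $\hat G$ to the ``diagonal'' sub-alphabet $\{\{\sigma_{x_1},\ldots,\sigma_{x_k}\}\mid\sigma\in\Sigma\}$ and testing the resulting CFG for emptiness. The only cosmetic difference is that the paper eliminates non-diagonal rules directly, whereas you phrase the same restriction as a CFL--regular intersection with a one-state DFA; the resulting grammar and the complexity analysis are identical.
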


\begin{proof}
Let $G$  be a syncCFHG. 
Since universal quantification is closed under subsets, it holds that if $\lang \in \hlang(G)$, then $\lang' \in \hlang(G)$ for every $\lang'\subseteq \lang$.
Therefore, it suffices to check whether there exists a singleton $\lang$ 
such that $\lang\in\hlang(G)$. 
Therefore, we consider only word assignments of the form ${\bi w} = \wass{w}{[x_1\mapsto w]\cdots[x_k\mapsto w]}$ for some $w\in (\Sigma\cup\{\#\})^*$. 
Notice that $\bi w$ has a single representation, since $\#$ may not appear arbitrarily. 
We construct a syncCFHG $G'$ by restricting $\hat G$ to the alphabet $\bigcup_{\sigma\in \Sigma} \{\sigma\}^X$, that is, all variables are assigned the same letter. 
All rules over other alphabet letters are eliminated. Since elimination of rules cannot induce asynchronization, $G'$ is synchronous.

Now, for a singleton language $\{w\}$, we have $\{w\}\in\hlang(G)$ iff $\{w\}\in\hlang(G')$. Therefore, it suffices to check the nonemptiness of $\hlang(G')$, which amounts to checking the nonemptiness of $\hat G'$. 

The proof holds also for the case of $\exists\forall^*$-syncCFHG. Indeed, an $\exists\forall^*$-syncCFHG $G$ is nonempty iff it derives a singleton hyperlanguage. This, since in a language derived by $G$, a word $w$  that is assigned to the variable under $\exists$ must also be assigned to all variables under $\forall$ in one of the word assignments derived by $\hat G$,  which in turn fulfills the requirements for deriving $\{w\}$. Since $G$ is synchronous, it suffices to restrict the alphabet to homogeneous letters and check for nonemptiness, as with $\forall^*$.  
\stam{

Now, if $\wass{w}{[x_1\mapsto w]\cdots[x_k\mapsto w]} \in\reglang{\hat{G'}}$ then it is also in $\reglang{\hat{G}}$ and thus $S= \{ w\downarrow_{\#}\}\in\hlang{(G)}$. This can be proved inductively using the semantics of CFHG, similar to the proof of Theorem~\ref{thm:existsemptiness}. 

On the other hand, we show that all word assignments in $\hat{G}$ of the form $\wass{w}{u} = \wass{w}{[x_1\mapsto w]\cdots[x_k\mapsto w]}$ are also accepted by $\hat{G'}$. Assume by contradiction that there is such $\wass{w}{u}\in\hat{G}\setminus\hat{G'}$. From the construction of $G'$, it holds that when deriving $\wass{w}{u}$, we have to apply a rule that is eliminated from $G'$. However this would imply that different letters were assigned to two occurrences of $w_i$ (the $i$'th letter of $w$) for some $i\leq|w|$, and since $G$ is synchronous, this implies deriving two different words, which is a contradiction. 
Therefore, all word assignments of the from $\wass{w}{[x_1\mapsto w]\cdots[x_k\mapsto w]}$ that are accepted by $\hat{G}$, are also accepted by $\hat{G'}$, and every language $S = \{ w\}\in\hlang(G)$ of size one corresponds to the word assignment $\wass{w}{[x_1\mapsto w]\cdots[x_k\mapsto w]}\in\reglang{\hat{G'}}$. 

We have that $\reglang{\hat{G'}} \neq \emptyset $ iff $\hlang{(G)}$ contains a language of size one, iff $\hlang{(G)}\neq \emptyset$. Therefore we can reduce the nonemptiness of $G$ to the nonemptiness of $\hat{G'}$, which can be done in polynomial time in $\hat{G'}$ and thus in $G$.
}
\end{proof}

\stam{
\begin{theorem}\label{thm:existsforallsync}
The nonemptiness problem for  $\exists\forall^*$-syncCFHG is in \comp{P}.
\end{theorem}

\begin{proof}
Let $G = \tuple{\Sigma, X, V, V_0, P, \exists x_1\forall x_2 \cdots \forall x_k}$ be a syncCFHG. 
Similar to the proof of Theorem~\ref{thm:forallsync}, we reduce the  nonemptiness problem of $G$, to the nonemptiness of the restricted $\hat{G'}$ over the alphabet 
$(\times_{\sigma\in\Sigma}\{ \sigma\}^k \times \{\#\}^k)^*$.
This is since we again only need to consider languages of size one, as we show below.
If there is some language $S\in\hlang{(G)}$, then there exists a word $w\in S$ and $w_{\#}\in w\uparrow_{\#}$
such that $S \vdash_{u[x_1\mapsto w_{\#}]} (\forall x_2 \cdots \forall x_k,\hat{G})$. In particular, there has to be an assignment $v:X\rightarrow (\Sigma\cup\{\#\})^*$ such that $v(x_i) = w_{\#}^i$
where $w_{\#}^i\in w\uparrow_{\#}$. This is since \emph{all} words of $S$ need to have corresponding assignments, and so is $w$. Since $G$ is synchronized and $\#$ can only appear at the end of words, we can choose $w_{\#}^i = w_{\#}$ for all $2\leq i\leq k$. We then end up with the words assignment $\wass{w}{[x_1\mapsto w_{\#}]\cdots[x_k\mapsto w_{\#}] }\in \reglang{\hat{G}}$, which is also in $\reglang{\hat{G'}}$. 
 
For the other direction, if there is $\wass{w}{u}=\wass{w}{[x_1\mapsto w_{\#}]\cdots[x_k\mapsto w_{\#}] }\in\reglang{\hat{G}}$, we can inductively show that 
$S \vdash_v (\exists x_1 \forall x_2\ldots \forall x_k, \hat{G})$ for $S=\{ w\downarrow_{\#}\}$. Thus $S\in\hlang(G)$. 

 We therefore showed that $\hlang{(G)}\neq\emptyset$ iff $\reglang{\hat{G'}}\neq\emptyset$, which we can check in polynomial time. 
\end{proof}
}

\stam{
\begin{theorem}
The membership problem of finite languages in syncCFHG is decidable. 
\end{theorem}

\begin{proof} 
Let $S$ be a finite language and let $G$ be a syncCFHG. As $S$ is finite, we can check every possible combination of words in $S$ as an assignment to the variables in $G$, and for each such permutation, check if this word is accepted by $\hat{G}$. 
The membership of $S\in\hlang{(G)}$ is determined according to definition~\ref{def:CFHG}. Since at every step we need to consider only a finite number of words, the process terminates and the problem is decidable. 
In terms of complexity, for each word assignment we consider, we check membership for the CFG $\hat{G}$, which is polynomial in the size of each word (given that the size of the grammar is fixed)~\cite{DBLP:books/daglib/Hopcroft}. 
However, the alternation of universal and existential quantifiers might force us to consider all possible word assignments, which are exponential in $|S|$. 
\end{proof}
\hadar{also consider finite membership for non sync}

}

In the {\em regular membership problem}, we ask whether a regular language $\lang$ can be derived by a synchCFHG $G$.
This problem is decidable for NFH~\cite{bs21}.
For $\lang=\Sigma^*$ and a $\forall$-CFHG $G$, the question amounts to checking the universality of $\hat G$, which is  undecidable~\cite{DBLP:journals/jcss/BakerB74}. Therefore, we have the following. 

\begin{theorem}\label{thm:forallsyncundec}
The regular membership problem for $\forall^*$-syncCFHG grammars is undecidable.  
\end{theorem}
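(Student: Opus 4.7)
The plan is a straightforward reduction from the universality problem for context-free grammars, which is known to be undecidable~\cite{DBLP:journals/jcss/BakerB74}, using just a single $\forall$-quantifier (which falls under the $\forall^*$-fragment). The reduction is essentially sketched in the paragraph preceding the theorem, and the main work is to check that the construction indeed produces a \emph{synchronous} CFHG and that the equivalence is straightforward.

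Given an instance of CFG universality --- a CFG $G_0 = \langle \Sigma, V, V_0, P_0 \rangle$ --- I would build a $\forall x$-syncCFHG $G = \langle \Sigma, \{x\}, V, V_0, P, \forall x \rangle$ whose underlying grammar $\hat G$ is obtained from $G_0$ by replacing every terminal $\sigma \in \Sigma$ with the letter $\{\sigma_x\} \in \hat\Sigma$. Since $|X| = 1$ and $\hat G$ uses no $\#$-letters at all, every word assignment derived by $\hat G$ is trivially synchronous; in particular $G$ is ranked, so it is a valid syncCFHG. Then I would pick the target regular language to be $\lang = \Sigma^*$, which is clearly regular (accepted by a one-state NFA).

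It remains to verify the equivalence $\Sigma^* \in \hlang(G) \iff \lang(G_0) = \Sigma^*$. By the semantics of $\forall x$ (item~\ref{item:forall} of the acceptance definition), $\Sigma^* \in \hlang(G)$ holds iff for every $w \in \Sigma^*$ there exists some padding $w_\# \in w\uparrow_\#$ with $\wass{w}{[x \mapsto w_\#]} \in \lang(\hat G)$. Because $\hat G$ derives only $\#$-free word assignments and the translation between $\Sigma$ and the singleton alphabet $\{\sigma_x \mid \sigma \in \Sigma\}$ is a bijection preserving derivations, this is equivalent to $w \in \lang(G_0)$ for every $w \in \Sigma^*$, i.e.\ universality of $G_0$.

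Since both the construction of $G$ and the one-state NFA for $\Sigma^*$ are polynomial, and since CFG universality is undecidable, the regular membership problem for $\forall^*$-syncCFHG is undecidable. I do not expect any real obstacle: the only subtle point is ensuring the constructed CFHG is genuinely synchronous, which is immediate because a single word variable with no $\#$-letters makes every derivation trivially satisfy the rank condition of Theorem~\ref{prop:rankedgrammars}.
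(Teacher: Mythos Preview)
Your proposal is correct and follows essentially the same approach as the paper: reduce from CFG universality by taking a single $\forall$-quantifier, letting the underlying grammar be (a relabeling of) the input CFG, and using $\lang=\Sigma^*$ as the regular target. Your write-up is more explicit than the paper's sketch in verifying that the resulting CFHG is ranked (all ranks are $\emptyset$ since no $\#$ appears) and in spelling out why the only admissible padding $w_\#$ is $w$ itself, but the underlying idea is identical.
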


\stam{
\begin{proof}
We show a reduction to the universality problem of CFG, which is undecidable~\cite{DBLP:journals/jcss/BakerB74}. 
Consider the language $\Sigma^*$ and let $G$ be a $\forall$-syncCFHG. In order to check if $\Sigma^* \in\hlang(G)$, we need to verify that every $w\in\Sigma^*$ is accepted by the underlying grammar $\hat{G}$. That is, $\Sigma^*\in\hlang(G)$ iff $\hat{G}$ is universal. 
\end{proof}
}
\begin{remark} \label{rem:finitemem}
 Membership of a finite language $\lang$ in a CFHG with any quantification condition is decidable already for general CFHG (Theorem~\ref{thm:finitemembership}), with exponential complexity. 
 For syncCFHG, we can reduce the complexity by checking membership of word assignments instead. This, since we only need to consider synchronous words, which have a single representation. 
 Since checking membership is polynomial in the size of the word (for a grammar of fixed size)~\cite{DBLP:books/daglib/Hopcroft}, every such test is then polynomial. Since we may still need to traverse all possible word assignment, the complexity is exponential in the length of the quantification condition, but is polynomial in $|G|$ and the size of the words in $\lang$. 
 \end{remark}

  \begin{remark}\label{rem:regmemsync}
  For regular languages and $\exists^*$-CFHG, synchronization does not avoid the composition of automata, and we use a construction similar to the one of Theorem~\ref{thm:regmodelcheck}.
\end{remark}

We now show that synchronicity does not suffice for deciding nonemptiness of $\exists^*\forall^*$-syncCFHG. 

\begin{theorem}\label{thm:emptinessexistsforall}
The nonemptiness problem for $\exists^*\forall^*$-syncCFHG is undecidable.
\end{theorem}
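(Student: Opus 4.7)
My plan is to reduce from the intersection-nonemptiness problem for context-free grammars, a classical undecidable problem. Given two CFGs $N_1$ and $N_2$ over an alphabet $\Sigma_0$, I will construct an $\exists x_1\exists x_2\forall y$-syncCFHG $G_T$ over the enlarged alphabet $\Sigma=\Sigma_0\cup\{\$\}$ (with $\$\notin\Sigma_0$) such that $G_T$ is nonempty iff $L(N_1)\cap L(N_2)\neq\emptyset$.

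The underlying grammar $\hat G_T$ will be the union of two ranked sub-CFGs. The first sub-CFG synchronously derives triples of the form $(w,\$,w)$ for every $w\in L(N_1)$: starting from a Greibach-like normal form for $N_1$, its rules emit $(a,\$,a)$ as the first letter and $(a,\#,a)$ for every subsequent letter. The second sub-CFG analogously derives triples $(w,\$,\$)$ for every $w\in L(N_2)$, emitting $(a,\$,\$)$ first and $(a,\#,\#)$ thereafter. In each case $\#$ appears only at the end of a word, so by Theorem~\ref{prop:rankedgrammars} each sub-CFG is ranked; joining them under a fresh initial variable preserves rankedness, making $G_T$ a legitimate syncCFHG.

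To verify the reduction I would apply the subset-closure argument of Theorem~\ref{thm:forallsync}: for any $\lang\in\hlang(G_T)$, the pair $\{w_1,w_2\}$ of existential witnesses itself lies in $\hlang(G_T)$, so both $(w_1,w_2,w_1)$ and $(w_1,w_2,w_2)$ must lie in $L(\hat G_T)$. Since no word of $L(N_1)\cup L(N_2)\subseteq\Sigma_0^{*}$ equals the single-letter word $\$$, the triple $(w_1,w_2,w_1)$ can only be produced by the first sub-CFG, forcing $w_2=\$$ and $w_1\in L(N_1)$; symmetrically the triple $(w_1,w_2,w_2)$ can only come from the second sub-CFG, forcing $w_1\in L(N_2)$. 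Hence $\hlang(G_T)\neq\emptyset$ implies $L(N_1)\cap L(N_2)\neq\emptyset$. The converse is immediate: given $w\in L(N_1)\cap L(N_2)$, the language $\{w,\$\}$ satisfies both derivation requirements, witnessing nonemptiness of $G_T$.

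The main obstacle is to carry out this reduction inside the synchronous fragment, i.e., without using mid-word $\#$ to distinguish or route information between components. This is resolved by promoting the distinguishing marker to a fresh alphabet symbol $\$\in\Sigma$, so that $\#$ is only ever needed as end-padding on the short components. The other essential ingredient is the use of two existentials rather than one: this precisely defeats the singleton-collapse argument that made $\exists\forall^{*}$-syncCFHG decidable (Theorem~\ref{thm:forallsync}), since now the grammar must derive two distinct triples simultaneously, and the conjunction of these two derivation constraints is what encodes the CFL-intersection problem into the nonemptiness of $G_T$.
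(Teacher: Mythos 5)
Your proof is correct, and at its core it uses the same mechanism as the paper's: an $\exists\exists\forall$ prefix in which one existential witness is the \emph{real} word and the other a short marker, and an underlying grammar that is the union of two sub-grammars --- one deriving triples (word, marker, word), the other (word, marker, marker) --- so that instantiating the universal variable at each of the two witnesses forces the same first-coordinate word to be derivable in both sub-grammars, which is where the undecidable conjunction is encoded. The differences are in the packaging. The paper reduces directly from PCP and uses $c^{|w|}$ as the marker word, so its grammar contains no $\#$ at all and synchronicity is immediate; you reduce from intersection-emptiness of CFLs (itself undecidable via PCP, so one abstraction layer up) and use the one-letter marker $\$$, which forces you to pad the second coordinate with $\#$ and hence to argue rankedness via the Greibach first-letter trick and Theorem~\ref{prop:rankedgrammars}. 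Both routes work; yours is more modular, the paper's is more self-contained and avoids the padding bookkeeping. Two small points you should make explicit: (i) under the semantics the $\#$-padding of $x_2$ is fixed \emph{before} the universal variable is instantiated, so the two sub-grammars must pad $\$$ identically --- this holds in your construction because both triples have total length $|w_1|$ and both pad the marker to $\$\#^{|w_1|-1}$, but it deserves a sentence; (ii) the case $\varepsilon\in L(N_1)\cap L(N_2)$ breaks the ``first letter emits $\$$'' encoding and should be excluded by a (decidable) preprocessing step before the reduction.
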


\begin{proof}
We reduce from PCP. 
Let $T = \{ [a_1, b_1], \ldots ,[a_n, b_n]\}$ be a PCP instance over $\{ a,b\}$,
and let $$G = \tuple{\{a,b, c\}\cup[1,n],\{ x_1, x_2, x_3\}, \{V_0, V_1, V_2\}, V_0, P, \exists x_1\exists x_2 \forall x_3}$$ be a CFHG where $P$ is defined as follows. 
\begin{align*}
  &P:=  V_0 \rightarrow V_1 ~|~ V_2 \\ 
  &V_1 \rightarrow \tuplevar{{a_i}_{x_1}, {c^{|a_i|}}_{x_2} ,{a_i}_{x_3}}V_1 \tuplevar{i_{x_1}, c_{x_2}, i_{x_3}} 
  ~|~  \tuplevar{{a_i}_{x_1}, {c^{|a_i|}}_{x_2} ,{a_i}_{x_3}}\tuplevar{i_{x_1}, c_{x_2}, i_{x_3}} ~~\forall i\in[1,n] \\
   &V_2 \rightarrow \tuplevar{{b_i}_{x_1}, {c^{|b_i|}}_{x_2} ,{c^{|b_i|}}_{x_3}}V_2 \tuplevar{i_{x_1}, c_{x_2}, c_{x_3}} ~|~    \tuplevar{{b_i}_{x_1}, {c^{|b_i|}}_{x_2} ,{c^{|b_i|}}_{x_3}} \tuplevar{i_{x_1}, c_{x_2}, c_{x_3}} ~~\forall i\in[1,n] 
\end{align*}
Since none of the rules include the $\#$-symbol, $G$ is indeed a syncCFHG. 
Now, if there exists $\lang\in \hlang(G)$, then there exist $w \in \{a,b\}^*\cdot[1,n]^*$ and $w_c\in \{c\}^*$ both in $\lang$, such that for every $w'\in \lang$,
we have $V_0\Rightarrow^* \wass{w}{[x_1\mapsto w][x_1\mapsto w_c][x_3\mapsto w']}$. 
In particular, for $w'=w$, we have $V_0\Rightarrow^* \wass{w}{[x_1\mapsto w][x_1\mapsto w_c][x_3\mapsto w]}$. 
Since $V_2$ only derives words of the form $c^k$ in $x_3$, 
the derivation of $\wass{w}{[x_1\mapsto w][x_2\mapsto w_c][x_3\mapsto w]}$ is of the form $V_0 \Rightarrow V_1 \Rightarrow^* \wass{w}{[x_1\mapsto w][x_2\mapsto w_c][x_3\mapsto w]}$. In addition, all words in $\{ c\}^*$ can only be assigned to $x_3$ if derived from $V_2$, thus we have $V_0\Rightarrow V_2 \Rightarrow^*\wass{w}{[x_1\mapsto w][x_2\mapsto w_c][x_3\mapsto w_c]}$. 
Denote $w = w_1 w_2$ where $w_1\in\{a,b\}^*, w_2\in[1,n]$. Then,~$w$ encodes a solution to $T$, where $w_1$ is the string obtained from $a_i$ (and $b_i$), and $w_2$ is the sequence of~indices. 

For the other direction, a solution to $T$ encoded by a string $w_1$ and sequence of indices $w_2$ corresponds to the language $\{ w_1w_2, c^{|w_1w_2|}\}$ that is accepted by $G$. 
\stam{
=====

Since $V_2$ only derives words of the form $c^k$ in the $\forall$ assignment, all words $w'\in\{a,b\}^*$ assigned to $x_3$ have to be derived using $V_1$. 
In addition, all words of the form $w' = c^k$ can only be assigned to $x_3$ when derived from $V_2$. That is, there must exist $w\in\{a,b\}^*$ that is always assigned to $x_1$ such that $V_2 \Rightarrow^* \wass{w}{[x_1\mapsto w][x_2\mapsto c^{|w|}][x_3\mapsto c^{|w|}]}$
and in addition, there exists $w\in\{a,b\}^*$ such that $V_1 \Rightarrow^* \tuple{w, c^{|w|}, w'}$. Since $V_1$ only derives the same letters for $x_1$ and $x_3$, it holds that $w = w'$, and thus the derivation sequence is a solution to $T$. For the other direction, if there is a solution to $T$, then there exist such $w$ that can be derived using the two variables, and thus the language of $G$ is not empty. }
\end{proof}

The nonemptiness problem for $\forall^*\exists^*$-NFH is undecidable~\cite{bs21}. Therefore, this is also the case for syncCFHG, and for general CFHG. 
\stam{
\subsection{Closure Properties}

As CFG are not closed under intersection, this is also the case for CFHGs. We now show that CFHGs are closed under union.
\hadar{verify the implication from CFG to CFHG}

\begin{theorem}
CFHG are closed under union. 
\end{theorem}

\begin{proof}
We can follow the union construction for CFG, talk about sets of languages. 
\end{proof}

\begin{theorem}
synchronous grammars are closed under union. 

\end{theorem}

\begin{proof}
need to take care of $\#$ and add a lot of variables
\end{proof}

\hadar{can we say something about intersection of CFHG with hyperautomata? similar to theorem~\ref{thm:existsemptiness}}
}

\section{Discussion and Future Work}

We have studied the realizability problem for regular hyperlanguages, focusing on the case of singleton hyperlanguages. We have shown that simple quantification conditions cannot realize this case. We have defined ordered and partially-ordered languages, for which we can construct hyperautomata that enumerate the language by order. We have shown that all regular languages are partially ordered. Since regular hyperlanguages are closed under union~\cite{bs21}, the result extends to a finite hyperlanguage containing regular languages. Naturally, there are richer cases one can consider. For an infinite hyperlanguage $\hl$, some characterization on the elements of $\hl$ would need to be defined in order to explore its realizability. We plan on pursuing this direction as future work. Another related direction is finding techniques for proving unrealizability for certain quantification conditions, for various types of hyperlanguages. 

In the second part of the paper we have studied the natural extension of context-free grammars to handle context-free hyperlanguages. Here, we have shown that beyond the inherent undecidability of some decision problems for hypergrammars, some undecidability properties stem from the asynchronous nature of these hypergrammars. We have then defined a synchronous fragment of context-free hyperlanguages, and defined a fragment of context-free grammars which exactly captures this fragment. The result retains some of the decidability properties of context-free grammars. As a future direction, we plan to study the realizability problem for CFHG and syncCFHG. Due to the limited closure properties of CFG, this is expected to be more challenging than for NFH. Another possible future direction is studying the entire Chomsky hierarchy for hyperlanguages, and finding fragments of the extensions to hyperlanguages that conserve the properties of these models for standard languages. 

\paragraph{Acknowledgements.}
We thank the anonymous reviewer for suggesting the elegant construction mentioned in Remark~\ref{rem:automatic}.

\bibliographystyle{eptcs}
\bibliography{biblio}

\end{document}